\documentclass[lettersize,journal]{IEEEtran}
\usepackage{amsmath,amsfonts,amssymb} 

\usepackage{graphicx}
\usepackage{array}
\usepackage[caption=false,font=normalsize,labelfont=sf,textfont=sf]{subfig}
\usepackage{float}
\usepackage{placeins}
\usepackage{refcount}
\usepackage{dblfloatfix}

\usepackage{algorithm}
\usepackage{algorithmic}

\usepackage{tikz}
\usetikzlibrary{arrows.meta}
\usepackage{pgfplots}
\pgfplotsset{compat=1.18}

\usepackage{textcomp}
\usepackage{url}
\usepackage{cite}
\usepackage{verbatim}

\newcommand{\uprightref}[1]{\textup{\ref{#1}}}

\newtheorem{theorem}{Theorem}
\newtheorem{definition}{Definition}
\newtheorem{proposition}{Proposition}
\newtheorem{lemma}{Lemma}

\newenvironment{proof}
{\par\indent\textit{Proof:}\enspace}
{\hfill$\blacksquare$\par}

\newcounter{optprob}
\renewcommand{\theoptprob}{P\arabic{optprob}}

\makeatletter
\newcommand{\optprob}[1]{%
  \refstepcounter{optprob}%
  \ltx@label{#1}%
  \textnormal{(\theoptprob)}%
}
\makeatother

\newcommand{\probref}[1]{\textnormal{(\ref{#1})}}

\begin{document}
\title{Cross-Layer Analysis and Optimization for Movable Antenna Systems {with Delay-Outage Constraints}}
\vspace{-0.15cm}
\author{Haitao Yu, 
Weidong Mei,~\IEEEmembership{Member,~IEEE}, 
Lei Qian,
and Rui Zhang,~\IEEEmembership{Fellow,~IEEE}
\vspace{-0.5cm}
\thanks{Haitao Yu and Weidong Mei are with the National Key Laboratory of Wireless Communications, University of Electronic Science and Technology of China, Chengdu 610000, China (email:HaitaoYu0803@163.com, wmei@uestc.edu.cn). \textit{(Corresponding author: Weidong Mei.)}}
\thanks{Lei Qian is with the Tianjin Key Laboratory of Optoelectronic Detection Technology and System, Tiangong University, Tianjin, 300160, China (email: qianlei@tiangong.edu.cn).}
\thanks{Rui Zhang is with the Department of Electrical and Computer Engineering, National University of Singapore, Singapore 117583 (e-mail: elezhang@nus.edu.sg).}}
\markboth{Journal of \LaTeX\ Class Files,~Vol.~14, No.~8, August~2021}%
{Shell \MakeLowercase{\textit{et al.}}: A Sample Article Using IEEEtran.cls for IEEE Journals}


\renewcommand{\IEEEtitletopspaceextra}{-0.3cm}
\IEEEaftertitletext{\vspace{-0.05cm}}
\maketitle

\begin{abstract}
Movable antennas (MAs) have emerged as a promising technology for enhancing wireless communication performance. However, in delay-sensitive systems, MA movement incurs non-negligible delay, which may degrade the service process for bursty traffic. This paper investigates a delay-outage problem in an MA-enhanced multiuser multiple-input multiple-output (MU-MIMO) downlink system. Unlike existing studies that mainly focus on physical-layer performance optimization, a cross-layer framework is developed that accounts for the overall delay due to antenna movement, along with service interruptions and queue accumulation at the medium access control (MAC) layer. Considering the different timescales of antenna movement and packet-level queue evolution, we establish a two-timescale transmission protocol and a cross-layer queueing model with Markov-modulated bursty arrivals. A long-term transmit-power minimization problem is then formulated to jointly optimize the destination antenna positions and transmit beamforming under delay-outage and practical MA movement constraints. To tackle the probabilistic delay-outage constraints, a backlog martingale is constructed to derive a deterministic sufficient condition in terms of service rate, thereby transforming the probabilistic constraints into deterministic service rate requirements. An alternating optimization (AO) algorithm based on a short-horizon approximation is further developed to obtain a high-quality suboptimal solution. Numerical results demonstrate that the proposed algorithm outperforms baseline schemes by effectively balancing the MAs' spatial reconfiguration capability and the temporal service loss.
\end{abstract}
\vspace{-0.15cm}
\begin{IEEEkeywords}
Movable antennas, movement delay, martingale theory, delay-outage probability, cross-layer design.  
\vspace{-0.25cm}
\end{IEEEkeywords}

\section{Introduction}
\label{sec:introduction}
As sixth-generation (6G) mobile communication systems evolve to support emerging applications such as immersive communications, smart manufacturing, wireless networks are expected to provide high data rates, massive connectivity, and energy-efficient transmission while ensuring reliable low-latency services. For delay-sensitive applications, excessive latency can destabilize control loops, cause task failures, and create safety risks \cite{pang2025communication,khattak2023impact}. Since stringent statistical delay guarantees typically require additional resources and thereby reduce transmission efficiency, how to balance efficiency and statistical delay performance under limited spectrum, power, and hardware resources remains a fundamental challenge. 

On the other hand, multi-antenna transmission is fundamental to improving
wireless capacity, coverage, and reliability. Conventional fixed-position antenna (FPA) systems achieve spatial diversity gains mainly through larger arrays and beamforming optimization, but their degrees of freedom are constrained by fixed array geometries. Although adding antennas and/or radio-frequency (RF) chains may help boost the spatial diversity, this comes at the expense of dramatically higher hardware cost and processing complexity particularly for ultra-high frequency bands. Moreover, this barely improves beam-steering capability, as the spatial correlation among steering vectors corresponding to different directions remains fixed. In this context, movable antenna (MA) technology offers an alternative by repositioning antenna elements within a confined region. By this means, MA systems can exploit favorable locations to enhance desired signals, suppress interference, and improve beamforming flexibility \cite{zhu2026tutorial}.

Motivated by these advantages, extensive studies have investigated MA performance optimization under various system setups. For example, based on the field-response channel model developed in \cite{zhu2024modeling}, the authors in \cite{ma2024mimo} characterized the capacity of a point-to-point MA multiple-input multiple-output (MIMO) system, while the authors in \cite{zhu2024multiuser,xiao2024base,wu2023discrete,wang2024interference,yang2024flexible} studied rate maximization problems in MA-enhanced multi-user communication systems. Later studies have also investigated MA position optimization for non-orthogonal multiple access (NOMA) systems \cite{li2024noma}, multicasting systems \cite{gao2024multicast}, wideband systems \cite{zhu2024wideband}, spectrum sharing and sensing \cite{wei2024spectrum}\cite{ieee11478080}, physical-layer security \cite{hu2024secure,shen2025movable}, integrated sensing and communication \cite{lyu2025isac,jiang2025movable}, near-field communications \cite{zhu2025nearfield}, near-field sensing \cite{wang2025nearfieldsensing}, and short-packet transmission \cite{he2024shortpacket}. A more general six-dimensional MA (6DMA) model with joint antenna position and orientation control was further studied in \cite{shao2025_6dma_distribution,shao2025_6dma_discrete,shao2025_6dma_opportunities}. Unlike the above studies that optimize MA positions in a continuous domain based on the field-response channel, the authors in \cite{mei2024graph} and \cite{mei2024security_discrete} proposed discretizing the antenna movement region into a set of discrete sampling points. This approach converts the continuous antenna position optimization into a discrete point selection problem and eliminates the need for a continuous channel model. Although these studies confirm the benefits of MAs for channel reconfiguration, they primarily focus on the system performance achievable at the optimized MA positions, without incorporating the effects of MA movement delay.

However, in practical MA systems, antenna movement generally incurs non-negligible delay, consuming time resources that would otherwise be available for information transmission. To address this issue, the authors in \cite{li2026trajectory} investigated trajectory optimization for MAs to minimize the delay required to reposition antennas from their initial positions to the {optimized target positions}. Furthermore, the authors in \cite{wang2026throughput} incorporated antenna movement delay into communication performance analysis and revealed a fundamental tradeoff between movement delay and communication performance.

Despite these advances, existing studies on MAs have mainly focused on physical-layer performance, without examining their impact on queue evolution and delay quality of service (QoS) at the medium access control (MAC) layer. In short-packet systems such as industrial Internet of Things (IIoT) networks, bursty arrivals and service interruptions during MA movement can cause substantial queue accumulation. As a result, the existing backlog remains unserved while new arrivals further increase the queue length. Such queue buildup prolongs packet waiting times and deteriorates the delay-QoS performance. Therefore, MA movement creates a cross-layer coupling between physical-layer position control and MAC-layer delay-outage performance that cannot be fully characterized by the existing physical-layer-oriented studies.

This gap calls for a cross-layer analysis that connects MA movement with packet-level delay performance. It is worth noting that delay-outage performance analysis and delay-aware resource allocation have been widely studied at the MAC layer. Effective bandwidth (EB) and effective capacity (EC) are classical large-deviations-based tools that characterize the asymptotic decay of steady-state backlog and delay tails through the logarithmic moment-generating functions of the arrival and service processes, respectively \cite{elwalid1993effective,wu2003effective}. These approaches have been widely applied to delay-sensitive resource allocation in vehicular communications \cite{guo2019vehicular}, NOMA systems \cite{zeng2023noma}, visible light networks \cite{qian2021secure}, and industrial wireless networks \cite{li2024crosslayer}. However, their asymptotic approximations may be inaccurate at finite delay thresholds. Martingale-based approaches instead derive backlog and delay-outage bounds by analyzing sample-path threshold-crossing events using maximal inequalities and optional-stopping arguments \cite{poloczek2014envelope,poloczek2015service}. Such methods have been applied to vehicular networks \cite{hu2017vanet}, tactile communications \cite{yu2020tactile}, and heterogeneous satellite networks \cite{zhang2025satellite}.  More recently, sliding-block martingale analysis was developed to capture delay behavior over multiple observation scales with lower martingale-parameter determination complexity \cite{dang2026sliding}. Compared with EB/EC approximations, martingale methods can provide non-asymptotic delay-outage bounds under suitable conditions.

Nevertheless, the above studies mainly consider FPA systems, where the service process is typically characterized over a long observation horizon without explicitly accounting for interruptions induced by antenna movement. To the best of the authors’ knowledge, {cross-layer designs} for MA systems have not yet been investigated in prior work.

Motivated by the above, this paper investigates a delay-outage probability constrained design problem in the downlink of an MA-enhanced multiuser multiple-input {multiple}-output (MU-{M}IMO) system. We characterize the cross-layer impact of MA movement delay and optimize MA positions and transmit beamforming under prescribed delay-outage probability constraints  which account for {the overall delay due to antenna movement}, along with {service interruptions and queue accumulation at the MAC layer}. The main contributions are summarized as follows.

\vspace{-0.05cm}
\begin{itemize}

\item Due to the different timescales of antenna position control and packet-level queue evolution, a two-timescale transmission protocol is developed. Furthermore, by modeling bursty arrivals as a Markov-modulated process, a cross-layer queueing model is established to characterize the coupling among MA movement, service interruption, traffic accumulation, and post-movement service capability.

\item To enable power-efficient MA-enhanced transmission for delay-sensitive traffic, a long-term average transmit-power minimization problem is formulated based on the two-timescale protocol and cross-layer queueing model. The destination positions of the base station (BS)-side transmitter (Tx)-MAs and user-side receiver (Rx)-MAs, as well as the transmit beamforming vectors, are jointly optimized subject to prescribed delay-outage probability constraints and practical MA movement constraints.

\item To tackle the probabilistic delay-outage constraints, a backlog martingale is constructed to jointly characterize the inherited queue backlog, arrivals accumulated during MA movement, and the post-movement service process. By this means, a deterministic sufficient condition for the delay-outage constraints is derived with respect to the service rate, which renders the long-term power minimization problem more tractable.

\item Last, an alternating optimization (AO) algorithm based on a short-horizon approximation is developed to obtain a high-quality suboptimal solution to the considered problem, by combining a two-layer problem reformulation with successive convex approximation (SCA). Numerical results demonstrate that the proposed algorithm outperforms the considered baseline schemes by effectively satisfying the prescribed delay-outage constraints and balancing the channel-reshaping capability of MAs against the service-interruption cost.
\end{itemize}
\vspace{-0.1cm}

\begin{figure*}[!t]
    \centering
    \vspace{-0.8cm}
    \includegraphics[width=17cm]{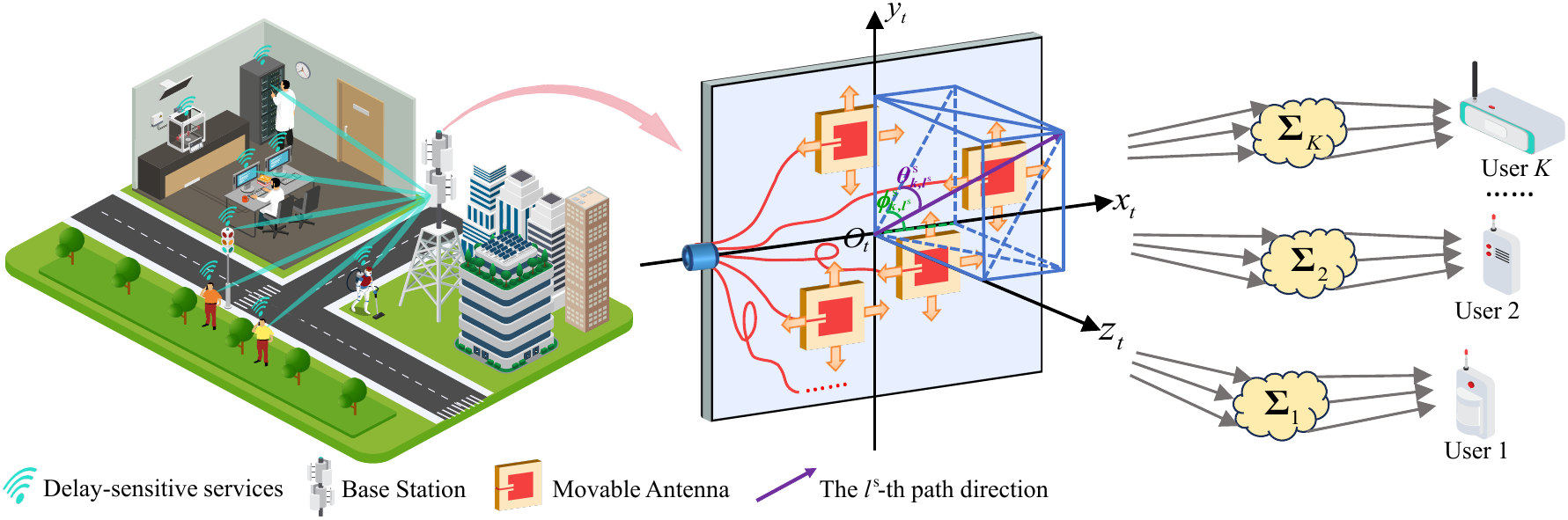}
    \vspace{-0.3cm}
    \caption{Illustration of the MA-enhanced MU-MIMO downlink system for delay-sensitive services.}
    \vspace{-0.3cm}
    \label{fig:system_model}
\end{figure*}

The remainder of this paper is organized as follows. {Section~II presents the system model, the two-timescale transmission protocol, and the long-term power-minimization problem. Section~III develops the backlog-martingale-based delay analysis and derives the sufficient conditions for the probabilistic delay-outage constraints.} Section~IV presents the proposed AO-based algorithm. Section~V presents numerical results, and Section~VI concludes the paper.

{\it{Notations:}}
Italic letters denote scalars, while boldface and calligraphic letters denote {vectors/matrices} and sets{, respectively}. $\mathbb{R}$ and $\mathbb{C}$ denote the {sets of} real and complex {numbers, respectively}. $(\cdot)^T$ and $(\cdot)^H$ denote transpose and conjugate transpose{, respectively}. {The notation $|\cdot|$ denotes the modulus of a scalar or the cardinality of a set, depending on the context, while} $\|\cdot\|_2$, $\mathrm{Tr}(\cdot)$, and $\det(\cdot)$ denote the Euclidean norm, trace, and determinant{, respectively}. $\Pr\{\cdot\}$ and $\mathbb{E}\{\cdot\}$ denote probability and expectation{, respectively}. {$\Re\{\cdot\}$ and $\Im\{\cdot\}$ denote the real and imaginary parts, respectively, and $\mathrm{diag}(\cdot)$ denotes a diagonal matrix.} $\mathbf{I}$, $\mathbf{1}$, and $\mathbf{e}_i$ denote an identity matrix, an all-one vector, and the $i$-th standard basis vector{, respectively}. Moreover, $[x]^+\triangleq\max\{x,0\}$, $\lceil x\rceil$ is the ceiling operation, {$\inf$ and $\sup$ denote the infimum and supremum, $a\wedge b\triangleq\min\{a,b\}$}, $\mathbb{I}\{\cdot\}$ is the indicator function, $sp(\cdot)$ is the spectral radius  and $\otimes$ denotes min-plus convolution. Finally, $\mathcal{CN}(\boldsymbol{\mu},\boldsymbol{\Sigma})$ denotes a circularly symmetric complex Gaussian distribution {with mean $\boldsymbol{\mu}$ and covariance matrix $\boldsymbol{\Sigma}$}, and $j\triangleq\sqrt{-1}$.

\section{System Model and Problem Formulation}

\subsection{Channel Model}
As shown in Fig.~\ref{fig:system_model}, we consider an MA-{enhanced} MU-MI{M}O downlink system, where a BS equipped with $M$ Tx-MAs serves $K$ users over the same time-frequency resource. Each user is equipped with a single Rx-MA. The sets of Tx-MAs and users are denoted by $\mathcal{M}\triangleq\{1,\ldots,M\}$ and $\mathcal{K}\triangleq\{1,\ldots,K\}$, respectively. All MAs are connected to RF chains via flexible cables. {At the BS, all $M$ Tx-MAs can be repositioned within a {given} two-dimensional (2D) movable region $\mathcal{C}^{\mathrm{s}}\!\subset\!\mathbb{R}^2$, whereas the single Rx-MA of user $k$ can move within its own 2D movable region $\mathcal{C}_k^{\mathrm{r}}\subset\mathbb{R}^2$.} Let $\boldsymbol{\Lambda}_m(t)\!=\!\left[x_m^{\mathrm{s}}(t),y_m^{\mathrm{s}}(t)\right]^T\!\!\!\in\!\mathcal{C}^{\mathrm{s}}$ denote the 2D position of the $m$-th Tx-MA. The Tx-side MA position matrix is defined as $\boldsymbol{\Lambda}(t)\!=\!\begin{bmatrix}\boldsymbol{\Lambda}_1(t),\cdots,\boldsymbol{\Lambda}_M(t)\end{bmatrix}\!\!\in\!\!\mathbb{R}^{2\times M}$. Similarly, the position of the Rx-MA {at} user $k$ is denoted by $\boldsymbol{\Omega}_k(t)=\left[x_k^{\mathrm{r}}(t),y_k^{\mathrm{r}}(t)\right]^T\in\mathcal{C}_k^{\mathrm{r}}$, where $\mathcal{C}_k^{\mathrm{r}}$ is the {user-specific} Rx-side movable region. The Rx-side MA position matrix is $\boldsymbol{\Omega}(t)=\begin{bmatrix}\boldsymbol{\Omega}_1(t),\cdots,\boldsymbol{\Omega}_K(t)\end{bmatrix}\in\mathbb{R}^{2\times K}$. Since multiple Tx-MAs share the {same} BS-side movable region, their positions should be sufficiently separated to avoid mutual coupling, i.e.,
\begin{equation}
\left\|\boldsymbol{\Lambda}_i(t)-\boldsymbol{\Lambda}_j(t)\right\|\geq D,\forall i\neq j, 
\end{equation}
\par \noindent where $D$ is the minimum allowable distance between any two Tx-MAs.

We adopt the field-response channel model {in \cite{zhu2024modeling}} to characterize the position-dependent {BS-user channels}. Given that the BS-side and user-side movable regions are much smaller than the corresponding BS-user propagation distances, the far-field condition holds {throughout these regions}. For {each} user $k$, let $L_{k}^{\mathrm{s}}$ and $L_{k}^{\mathrm{r}}$ denote the numbers of Tx-side and Rx-side channel paths, respectively. The elevation and azimuth angles of the $l^{\mathrm{s}}$-th Tx-side path are denoted by $\theta_{k,l^{\mathrm{s}}}^{\mathrm{s}}$ and $\phi_{k,l^{\mathrm{s}}}^{\mathrm{s}}$, while those of the $l^{\mathrm{r}}$-th Rx-side path are denoted by $\theta_{k,l^{\mathrm{r}}}^{\mathrm{r}}$ and $\phi_{k,l^{\mathrm{r}}}^{\mathrm{r}}$, respectively. The corresponding {angles of departure (AoDs) and angles of arrival (AoAs)} are defined as
\begin{subequations}
\begin{equation}
\begin{aligned}
\vartheta_{k,l^{\mathrm{s}}}^{\mathrm{s}}
&=
\cos\theta_{k,l^{\mathrm{s}}}^{\mathrm{s}}
\cos\phi_{k,l^{\mathrm{s}}}^{\mathrm{s}},
\ \ 
\varphi_{k,l^{\mathrm{s}}}^{\mathrm{s}}
=
\cos\theta_{k,l^{\mathrm{s}}}^{\mathrm{s}}
\sin\phi_{k,l^{\mathrm{s}}}^{\mathrm{s}}, \\
\varpi_{k,l^{\mathrm{s}}}^{\mathrm{s}}
&=
\sin\theta_{k,l^{\mathrm{s}}}^{\mathrm{s}},
\quad\ \ \ \,
1\le k\le K,\quad\ \ \ \, 1\le l^{\mathrm{s}}\le L_k^{\mathrm{s}},
\end{aligned}
\end{equation}
\begin{equation}
\begin{aligned}
\vartheta_{k,l^{\mathrm{r}}}^{\mathrm{r}}
&=
\cos\theta_{k,l^{\mathrm{r}}}^{\mathrm{r}}
\cos\phi_{k,l^{\mathrm{r}}}^{\mathrm{r}},
\ \ 
\varphi_{k,l^{\mathrm{r}}}^{\mathrm{r}}
=
\cos\theta_{k,l^{\mathrm{r}}}^{\mathrm{r}}
\sin\phi_{k,l^{\mathrm{r}}}^{\mathrm{r}}, \\
\varpi_{k,l^{\mathrm{r}}}^{\mathrm{r}}
&=
\sin\theta_{k,l^{\mathrm{r}}}^{\mathrm{r}},
\quad\ \ \ \,
1\le k\le K,\quad\ \ \ \, 1\le l^{\mathrm{r}}\le L_k^{\mathrm{r}}.
\end{aligned}
\end{equation}
\end{subequations}
For the $m$-th Tx-MA, the path-length difference for the $l^{\mathrm{s}}$-th Tx-side path of user $k$ is given by $\rho_{k,l^{\mathrm{s}}}^{\mathrm{s}}\left(\boldsymbol{\Lambda}_{m}(t)\right)=x_{m}^{\mathrm{s}}(t)\vartheta_{k,l^{\mathrm{s}}}^{\mathrm{s}}+y_{m}^{\mathrm{s}}(t)\varphi_{k,l^{\mathrm{s}}}^{\mathrm{s}}.$ Similarly, the path-length difference induced by the Rx-MA position of user $k$ for the $l^{\mathrm{r}}$-th Rx-side path is $\rho_{k,l^{\mathrm{r}}}^{\mathrm{r}}\left(\boldsymbol{\Omega}_k(t)\right)=x_k^{\mathrm{r}}(t)\vartheta_{k,l^{\mathrm{r}}}^{\mathrm{r}}+y_k^{\mathrm{r}}(t)\varphi_{k,l^{\mathrm{r}}}^{\mathrm{r}}.$ Accordingly, the transmit field-response vector is
\begin{equation}
\mathbf{g}_k\!\left(\!\mathbf{\Lambda}_m(t)\right)\!
=\!\!
\left[
e^{j\!\frac{2\pi}{\lambda}\!\rho_{k,1}^{\mathrm{s}}\!\left(\!\mathbf{\Lambda}_m(t)\right)}\!,
\!\cdots\!,\!
e^{j\!\frac{2\pi}{\lambda}\!\rho_{k,L_k^{\mathrm{s}}}^{\mathrm{s}}\!\left(\!\mathbf{\Lambda}_m(t)\right)}
\right]^{\!T}\!\!\!\!\in\!\mathbb{C}^{L_k^{\mathrm{s}}\!\times\!1} {\!,}
\end{equation} {where $\lambda$ denotes the carrier wavelength}. Stacking the transmit field-response vectors of all Tx-MAs yields the transmit field-response matrix of user $k$ as
\begin{equation}
\mathbf{G}_k\left(\boldsymbol{\Lambda}(t)\right)=\left[\mathbf{g}_k\left(\boldsymbol{\Lambda}_1(t)\right),\ldots,\mathbf{g}_k\left(\boldsymbol{\Lambda}_M(t)\right)\right]\in\mathbb{C}^{L_k^{\mathrm{s}}\times M}.
\end{equation}
The receive field-response vector of user $k$ is defined as
\begin{equation}
\mathbf{f}_k\!\left(\boldsymbol{\Omega}_k(t)\right)\!\!=\!\!\left[e^{j\!\frac{2\pi}{\lambda}\!\rho_{k,1}^{\mathrm{r}}\!(\boldsymbol{\Omega}_k(t))}\!,\!\cdots\!,e^{j\!\frac{2\pi}{\lambda}\!\rho_{k,L_k^{\mathrm{r}}}^{\mathrm{r}}\!(\boldsymbol{\Omega}_k(t))}\right]^{\!T}\!\!\!\in\!\mathbb{C}^{L_k^{\mathrm{r}}\!\times\!1}\!.
\end{equation} Let $\boldsymbol{\Sigma}_k(t)\in\mathbb{C}^{L_k^{\mathrm{r}}\times L_k^{\mathrm{s}}}$ denote the path-response matrix of user $k$, where the $(l^{\mathrm{r}},l^{\mathrm{s}})$-th element characterizes the complex response coefficient between the $l^{\mathrm{s}}$-th Tx-side path and the $l^{\mathrm{r}}$-th Rx-side path. Based on the above definitions, the equivalent downlink channel vector of user $k$ is given by
\begin{equation}
\mathbf{h}_k^{{H}}\!\!\left(\boldsymbol{\Lambda}(t),\boldsymbol{\Omega}_k(t)\right)\!=\!\mathbf{f}_k^{{H}}\!\!\left(\boldsymbol{\Omega}_k(t)\right)\!\boldsymbol{\Sigma}_k(t)\mathbf{G}_k\!\left(\boldsymbol{\Lambda}(t)\right)\in\mathbb{C}^{1\times M}.
\end{equation}
By stacking all {BS-user} channels, the overall downlink channel matrix is obtained as
\vspace{-0.05cm}
\begin{equation}
\mathbf{H}\left(\boldsymbol{\Lambda}(t),\boldsymbol{\Omega}(t)\right)=\begin{bmatrix}\mathbf{h}_1^{{H}}\left(\boldsymbol{\Lambda}(t),\boldsymbol{\Omega}_1(t)\right)\\\vdots\\\mathbf{h}_K^{{H}}\left(\boldsymbol{\Lambda}(t),\boldsymbol{\Omega}_K(t)\right)\end{bmatrix}\in\mathbb{C}^{K\times M}.
\end{equation}

It follows that the BS-user channels in the MA-{enhanced} MU-MIMO system are jointly determined by the Tx-side position matrix $\boldsymbol{\Lambda}(t)$ and the Rx-side position matrix $\boldsymbol{\Omega}(t)$.

\begin{figure*}[!t]
    \centering
    \vspace{-0.7cm}
    \includegraphics[width=17.5cm]{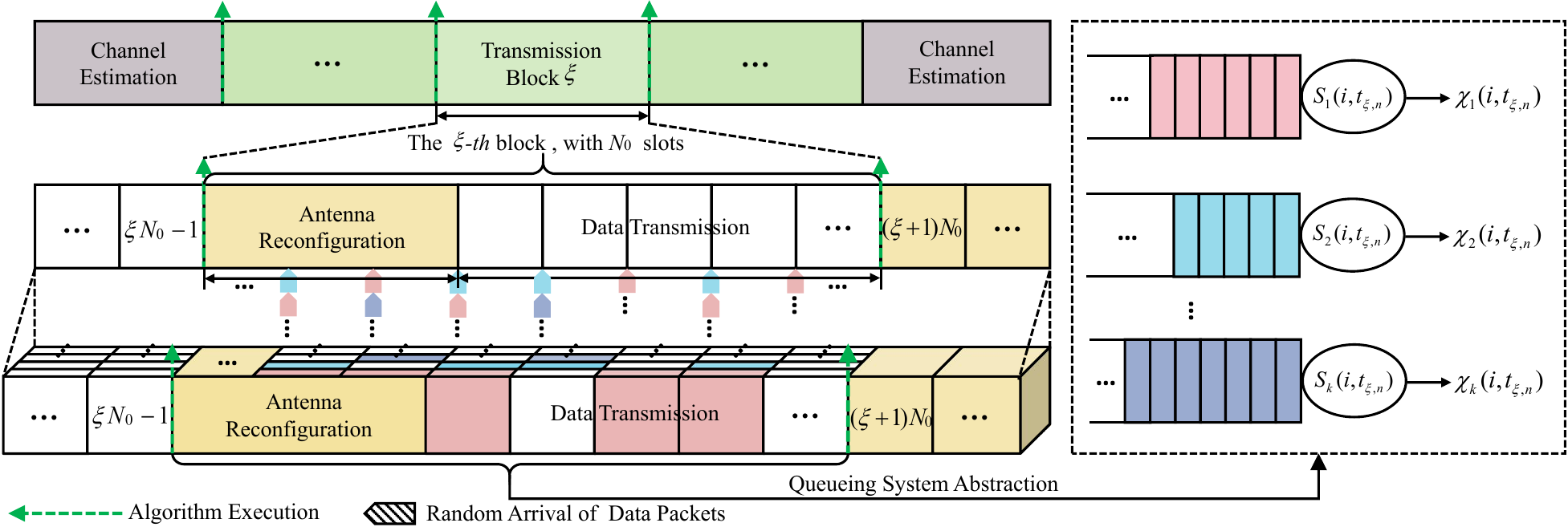}
    \vspace{-0.3cm}
    \caption{{Two-timescale protocol and queueing model for the considered system.}}
    \vspace{-0.3cm}
    \label{fig:transmission protocol}
\end{figure*} 

\subsection{Two-Timescale {Protocol} and Block Service Model}

We consider a {block-fading channel model, where all associated channel parameters are} assumed to remain quasi-static within each block, while {they} may vary between {different} blocks. Since MA movement relies on mechanical, electromagnetic, or fluidic actuation mechanisms, antenna positions cannot be updated instantaneously at the packet-arrival timescale. We therefore adopt a two-timescale transmission protocol, as illustrated in 
Fig.~\ref{fig:transmission protocol}. The transmission timeline is divided into consecutive blocks, each comprising $N_{0}$ time slots of duration $\tau${; hence,} the block duration is $T=N_{0}\tau$. MA positions and transmit beamforming vectors are updated at a large (block) timescale, whereas packet arrivals and queues evolve at a small (slot) timescale. Let $\boldsymbol{\Xi}$ denote the set of block indices, with $\xi\in\boldsymbol{\Xi}$ being the block index. The global slot set of block $\xi$ is {defined as} $\mathcal{T}_{\xi}\triangleq \{\xi N_{0},\xi N_{0}+1,\ldots,(\xi+1)N_{0}-1\}$, and its $n$-th local slot is denoted by $t_{\xi,n}\triangleq\xi N_{0}+n$. {Within} each slot, packets arrive randomly at the BS buffer, while at the beginning of block $\xi$, the controller determines the target MA positions and beamforming strategy. Let {$\boldsymbol{\Lambda}_m(t_{\xi,n})$ and $\boldsymbol{\Omega}_k(t_{\xi,n})$} denote the positions of the {$m$-th Tx-MA and the Rx-MA of user $k$} at the {$n$-th time slot in block $\xi$}, respectively. Then, within the {$\xi$-th time block}, the $m$-th Tx-MA is moved from $\boldsymbol{\Lambda}_{m}(t_{\xi,0})$ to $\boldsymbol{\Lambda}_{m}(t_{\xi,{n_{\mathrm{s},m}}})$, and the Rx-MA of user $k$ is moved from $\boldsymbol{\Omega}_{k}(t_{\xi,0})$ to $\boldsymbol{\Omega}_{k}(t_{\xi,{n_{\mathrm{r},k}}})$, where ${n_{\mathrm{s},m}}$ and ${n_{\mathrm{r},k}}$ denote the {indices of the time slots} when they move to their respective {target positions}, respectively. {Let} the Tx-MA and Rx-MA {movement speeds be denoted as} $v^{\mathrm{s}}$ and $v^{\mathrm{r}}$, respectively{. Since the MAs at the BS and user sides can be moved} in parallel, the {effective movement delays} at the BS and user sides are determined by {the largest delay among all Tx-MAs and {Rx-MAs}}, respectively, i.e.,
\vspace{-0.1cm}
\begin{subequations}\label{eq:movement-delay}
\begin{align}
W^{\mathrm{s}}(\xi)
&=\max_{m\in\mathcal{M}}
\frac{
\left\|\boldsymbol{\Lambda}_m(t_{\xi,{n_{\mathrm{s},m}}})
-\boldsymbol{\Lambda}_m(t_{\xi,0})\right\|_2
}{v^{\mathrm{s}}},
\label{eq:tx-movement-delay}\\
W^{\mathrm{r}}(\xi)
&\,=\,\max_{k\in\mathcal{K}}\,
\frac{
\left\|\boldsymbol{\Omega}_k(t_{\xi,{n_{\mathrm{r},k}}})\,
-\,\,\boldsymbol{\Omega}_k(t_{\xi,0})\right\|_2
}{v^{\mathrm{r}}}.
\end{align}
\vspace{-0.55cm}
\end{subequations}
\par \noindent {It follows that} the overall movement delay {for the $\xi$-th time block} is {given by} $ W_{\mathrm{mov}}(\xi)\!\!=\!\!\max\!\begin{Bmatrix}\!W^{\mathrm{s}}(\xi),\!W^{\mathrm{r}}(\xi)\!\end{Bmatrix}\!\!.$
Accordingly, the number of slots occupied by MA {movement} is 
\begin{equation}
{N_{\mathrm{mov}}}(\xi)
=
\min
\left\{
N_{0},
\left\lceil
{W_{\mathrm{mov}}(\xi)}/{\tau}
\right\rceil
\right\},
\label{slot rounding}
\end{equation}
and the number of slots available for data transmission {(DT) in the $\xi$-th time block} is $N_{\mathrm{DT}}(\xi)=N_{0}-{N_{\mathrm{mov}}}(\xi).$
The local slot sets corresponding to the {MA movement} phase and the data-transmission phase are respectively defined as ${\mathcal{T}_{\mathrm{mov}}}(\xi)=
\{n:0\le n<{N_{\mathrm{mov}}}(\xi)\},$ and $\mathcal{T}_{\mathrm{DT}}(\xi) = \{n:{N_{\mathrm{mov}}}(\xi)\le n<N_{0}\}.$ During ${\mathcal{T}_{\mathrm{mov}}}(\xi)$,  the equivalent channel may fluctuate and reliable short-packet transmission cannot be guaranteed. Therefore, no {DT} is performed during {this} phase, whereas downlink {DT} is {performed} during $\mathcal{T}_{\mathrm{DT}}(\xi)$. 
After MA {movement} is completed in block $\xi$, the BS-side Tx-MAs and user-side Rx-MAs remain fixed during the DT phase. These positions carry over to the next block as initial positions, i.e., 
\vspace{-0.45cm}
\begin{subequations}
\label{eq:MA_position_continuity_model}
\begin{align}
\boldsymbol{\Lambda}_m(t_{\xi+1,0})
&=
\boldsymbol{\Lambda}_m(t_{\xi,n_{\mathrm{s},m}}),
\quad \forall m,\xi,
\label{eq:Tx_position_continuity_model}\\
\boldsymbol{\Omega}_k(t_{\xi+1,0})
&=
\boldsymbol{\Omega}_k(t_{\xi,n_{\mathrm{r},k}}),
\quad \forall k,\xi.
\label{eq:Rx_position_continuity_model}
\end{align}
\end{subequations}
Based on the field-response channel model {presented} in Section~II-A, the downlink channel vector of user $k$ is denoted by $\mathbf{h}_k(\xi)=\mathbf{h}_k\left(\{\boldsymbol{\Lambda}_m(t_{\xi,{n_{\mathrm{s},m}}})\}_{m\in\mathcal{M}},\boldsymbol{\Omega}_k(t_{\xi,{n_{\mathrm{r},k}}})\right).$ Let $\boldsymbol{\omega}_{k}(\xi)\in\mathbb{C}^{M\times 1}$ denote the beamforming vector for user $k$ in block $\xi$, and define the {BS precoding} matrix as $\boldsymbol{\omega}(\xi)
=
[\boldsymbol{\omega}_{1}(\xi),
\ldots,
\boldsymbol{\omega}_{K}(\xi)]
\in\mathbb{C}^{M\times K}.$ Then, the baseband {received signal at} user $k$ at any DT slot $t_{\xi,n}$ is given by
\vspace{-0.05cm}
\begin{equation}
\begin{aligned}
& y_{k}(t_{\xi,n}) 
\,=\, \mathbf{h}_{k}^{H}(\xi)\,\boldsymbol{\omega}_{k}(\xi)\,x_{k}(t_{\xi,n})\, \,+ \\
&\!\!\!\!\! \vspace{-0.5cm}\!\sum_{\substack{j=1,j\ne k}}^{K}\!\!\!
\mathbf{h}_{k}^{H}(\xi)\boldsymbol{\omega}_{j}(\xi)x_{j}(t_{\xi,n})
+ n_{k}(t_{\xi,n}),
\end{aligned}
\vspace{-0.05cm}
\end{equation} 
where $x_{k}(t_{\xi,n})$ is the normalized information symbol intended for user $k$ with $\mathbb{E}{|x_{k}(t_{\xi,n})|^{2}}=1$, and $n_{k}(t_{\xi,n})\sim\mathcal{CN}(0,\sigma_{k}^{2})$ denotes the receiver noise{, with $\sigma_k^2$ denoting the average noise power}. The {signal-to-interference-plus-noise ratio (SINR)} {achievable at} user $k$ in block $\xi$ is expressed as
\begin{equation}
\gamma_{k}(\xi)
=
\frac{
\left|
\mathbf{h}_{k}^{H}(\xi)\boldsymbol{\omega}_{k}(\xi)
\right|^{2}
}{
\sum_{j=1,j\ne k}^{K}
\left|
\mathbf{h}_{k}^{H}(\xi)\boldsymbol{\omega}_{j}(\xi)
\right|^{2}
+
\sigma_{k}^{2}
}.
\end{equation} 
{We consider finite-blocklength transmission {in this paper}, where each codeword spans $m_{\mathrm{s}}$ channel uses.} Let $B$ denote the bandwidth. The finite-blocklength achievable rate of user $k$ during the DT phase of block $\xi$ is {approximated as \cite{he2024shortpacket}}
\vspace{-0.05cm}
\begin{equation}
C_{k}(\xi)
\!\approx\!
B\!\!
\left[
\!\log_{2}(1\!\!+\!\!\gamma_{k}(\xi))
\!\!-\!\!
(\log_{2}\!e)\!
\sqrt{\!
\frac{(2\!\!+\!\gamma_{k}(\xi))\gamma_{k}(\xi)}
{\!\!(1\!\!+\!\gamma_{k}(\xi))^{2}{m_{\mathrm{s}}}}
}\!
f_{Q}^{-1}\!(\hbar)
\!\right]^{\!\!+}\!,
\end{equation}
\vspace{-0.4cm} \par \noindent where $[x]^+\triangleq\max\{x,0\}$, {$\hbar$ denotes the prescribed error rate} and $f_{Q}^{-1}(\cdot)$ denotes the inverse Gaussian $Q$-function. {The} per-slot {achievable rate} of user $k$ is {thus given by} $\tau C_k(\xi)$. {To keep consistent with the queuing model presented later, we normalize this rate in number of bits as the number of packets by defining $\hat{C}_{k}(\xi)=\tau C_k(\xi)/L_{\mathrm{pkt}}$, where $L_{\mathrm{pkt}}$ denotes the number of bits in one packet.}

\vspace{-0.3cm}
\subsection{Queue Evolution and Delay Model}
\vspace{-0.08cm}
{We {assume} that the BS maintains a {first-in-first-out (FIFO)} queue for each user, with arrival, service, and backlog measured in packets.} To characterize the bursty nature of delay-sensitive traffic, we model the packet arrival process as a Markov-modulated On-Off (MMOO) process, as illustrated in Fig.~\ref{fig:MMOO_model}. {Let $Z_{k}(t_{\xi,n})$ denote the traffic state of user $k$ at slot $t_{\xi,n}$, {which takes the value of} 0 {or} 1, {corresponding} to the OFF (no arrival) and ON (bursty arrival) states, respectively.} The state transition matrix is given by
\vspace{-0.025cm}
\begin{equation}
\mathbf{P}_k=
\begin{bmatrix}
1-p_k & p_k\\
q_k & 1-q_k
\end{bmatrix},
\end{equation}

\par \noindent where $p_k$ and $q_k$ denote the OFF-to-ON and ON-to-OFF transition probabilities, respectively. Accordingly, the steady-state probabilities of the OFF and ON states are {given by} $\pi_{k,\mathrm{off}}=q_k/ \!\left(p_k\!+\!q_k\right),\pi_{k,\mathrm{on}} = p_k/ \!\left(p_k\!+\!q_k\right)\!,$ respectively.

\begin{figure}[t] 
\centering
\vspace{-0.5cm}
\includegraphics[width=7.5cm]
{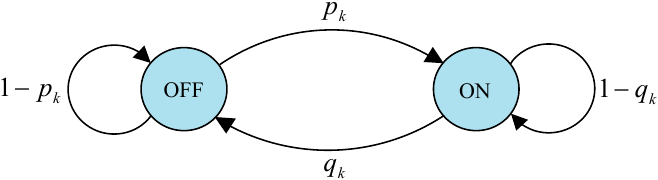} 
\vspace{-0.3cm}
\caption{Illustration of the {MMOO} process.}
\vspace{0.05cm}
\label{fig:MMOO_model} 
\end{figure}

Let $R_k$ denote the number of packets arriving in the ON state and {$a_k(t_{\xi,n})$} denote the packet arrival amount in slot $t_{\xi,n}$, {with} $a_k(t_{\xi,n})\,=\,Z_k(t_{\xi,n})\,R_k.$ {Hence,} the cumulative arrival
\par \noindent {from slot $i$ to slot $j$} within block $\xi$ is
\vspace{-0.08cm}
\begin{equation}
    A_{k}(t_{\xi,i},t_{\xi,j})\!=\!\sum_{n=i}^{j-1}a_{k}(t_{\xi,n}). 
    \vspace{-0.06cm}
\end{equation} Let {$s_k(t_{\xi,n})$} {denote} the number of packets that can be served for user $k$ in slot $t_{\xi,n}$. Since no effective service is available during MA movement, $s_k(t_{\xi,n})=\mathbb{I}\{n\!\in\!\mathcal{T}_{\mathrm{DT}}(\xi)\}\hat{C}_k(\xi),$ where {$\mathbb{I}\{\cdot\}$ denotes the indicator function}. Accordingly, the cumulative service provided to user $k$ over $[t_{\xi,i},t_{\xi,j})$ is
\vspace{-0.15cm}
\begin{equation}
S_k(t_{\xi,i},t_{\xi,j})
\!=\!\!\sum_{n=i}^{j-1}s_k(t_{\xi,n}).
\vspace{-0.05cm}
\end{equation}
Let $Q_k(t_{\xi,n})$ denote the queue backlog of the $n$-th {time} slot in block $\xi$. Its evolution is given by 
\begin{equation}
\vspace{-0.08cm}
Q_k(t_{\xi,n+1}) \!=\! \left[Q_k(t_{\xi,n})\!-\!s_k(t_{\xi,n})\right]^+
\!+a_k(t_{\xi,n}).
\vspace{-0.05cm}
\end{equation}
Since the queue evolves continuously across transmission blocks, {we have} $Q_k(t_{\xi+1,0})\!=\!Q_k(t_{\xi,N_0}).$ As no effective data service is provided during MA {movement}, the backlog at the  {beginning} of the DT interval is expressed as
\begin{equation}
\vspace{-0.08cm}
{B_k(\xi)}
=
Q_k(t_{\xi,0})
+
A_k\!\left(t_{\xi,0},
t_{\xi,{N_{\mathrm{mov}}}(\xi)}\right),
\vspace{-0.05cm}
\label{eq:start_backlog}
\end{equation}
which comprises the inherited backlog i.e., $Q_k(t_{\xi,0})$ and the arrivals accumulated during {antenna movement} i.e., $A_k\!\left(t_{\xi,0},
t_{\xi,{N_{\mathrm{mov}}}(\xi)}\right)$.

For notational compactness, define the DT-phase {time-slot index as}
$\tilde{t}_{\xi,r}\triangleq
t_{\xi,N_{\mathrm{mov}}(\xi)+r}$,
where $r=0,\ldots,N_{\mathrm{DT}}(\xi)$.
{Based on \eqref{eq:start_backlog},} the total traffic demand to be
served {in} DT-phase slot $r$ is given by
\begin{equation}
\vspace{-0.05cm}
A_{k,\xi}^{\mathrm{DT}}(0,r)
\triangleq
B_k(\xi)
+
A_k(\tilde{t}_{\xi,0},\tilde{t}_{\xi,r}),
\quad r\geq0,
\label{eq:DT_augmented_arrival}
\end{equation}
with $A_{k,\xi}^{\mathrm{DT}}(0,r)=0$ for $r<0$. For
$0\leq u\leq r\leq N_{\mathrm{DT}}(\xi)$, the cumulative service provided during the DT phase is
\begin{equation}
\vspace{-0.05cm}
S_{k,\xi}^{\mathrm{DT}}(u,r)
\triangleq
S_k(\tilde{t}_{\xi,u},\tilde{t}_{\xi,r})
=
(r-u)\hat{C}_k(\xi).
\vspace{-0.05cm}
\label{eq:DT_service_curve}
\end{equation}
Under the FIFO discipline, the cumulative departure
satisfies the standard min-plus input--output relation
\cite{poloczek2014envelope}, i.e.,
\begin{equation}
\vspace{-0.05cm}
\chi_{k,\xi}^{\mathrm{DT}}\!(0,\!r)
\!\!=\!\!
\left(
A_{k,\xi}^{\mathrm{DT}}
\otimes
S_{k,\xi}^{\mathrm{DT}}
\right)\!(0,\!r)\!\!=\!\!\!\!\!
\inf_{0\leq u\leq r}\!\!\!
\left\{\!
A_{k,\xi}^{\mathrm{DT}}\!(0,u)
\!\!+\!\!
S_{k,\xi}^{\mathrm{DT}}\!(u,r)\!
\right\}\!.
\vspace{-0.05cm}
\label{eq:DT_departure}
\end{equation} The DT-phase queueing delay at $\tilde{t}_{\xi,r}$ is defined as the minimum backward interval for which the cumulative traffic demand does not exceed the cumulative departures \cite{poloczek2015service}, i.e.,
\begin{equation}
\vspace{-0.05cm}
D_k^{\mathrm{queue}}(\tilde{t}_{\xi,r})
\!=\!
\inf\!\left\{
 j \geq 0\!:\!
 A_{k,\xi}^{\mathrm{DT}}(0,r\!-\!j)
 \!\leq\!
 \chi_{k,\xi}^{\mathrm{DT}}(0,r)
\right\}\!.
\vspace{-0.05cm}
\label{eq:queueing-delay definition}
\end{equation} Given {a} prescribed maximum delay threshold $D_k^{\max}$, the delay
budget remaining for DT-phase queueing is
\vspace{-0.08cm}
\begin{equation}
D_k^{\mathrm{rem}}(\xi)
=
D_k^{\max}
-
N_{\mathrm{mov}}(\xi).
\label{eq:remaining_delay_budget}
\vspace{-0.08cm}
\end{equation}
Let $\varepsilon_k$ denote a prescribed maximum delay-outage probability. Then, it must hold that
\vspace{-0.1cm}
\begin{equation}
\Pr\!\left\{
D_k(\tilde{t}_{\xi,r})
>
D_k^{\max}
\right\}
\leq
\varepsilon_k,
\quad k\in\mathcal{K}.
\label{eq:delay_outage_constraint}
\vspace{-0.05cm}
\end{equation}

\vspace{-0.1cm}
\subsection{{Problem} {Formulation}}
For block $\xi$, define the target position sets of the BS-side Tx-MAs and user-side Rx-MAs as
\begin{equation}
\vspace{-0.1cm}
\boldsymbol{\Lambda}(\xi)
\!=\!
\left\{\!
\boldsymbol{\Lambda}_m\!\!
\left(
t_{\xi,{n_{\mathrm{s},m}}}\!
\right)\!
\right\}_{\!m\in\mathcal{M}}\!,
\boldsymbol{\Omega}(\xi)
\!=\!
\left\{\!
\boldsymbol{\Omega}_k\!\!
\left(
t_{\xi,{n_{\mathrm{r},k}}}
\right)\!
\right\}_{\!k\in\mathcal{K}}\!.
\label{eq:target_position_sets}
\end{equation}
In this paper, we aim to minimize the long-term average transmit power while satisfying the delay-outage probability constraints and practical MA movement constraints. The associated optimization problem is formulated as\vspace{-0.06cm}
\begin{subequations}
\begin{align}
\optprob{prob:P1}:\!\!
\min_{\{\boldsymbol{\Lambda}(\xi),\boldsymbol{\Omega}(\xi),\boldsymbol{\omega}(\xi)\}}
&
\lim_{|\boldsymbol{\Xi}|\rightarrow\infty}\!\!\frac{1}{|\boldsymbol{\Xi}|}\!\!
\sum_{\xi\in \boldsymbol{\Xi}}
\mathrm{Tr}\!\left(\boldsymbol{\omega}(\xi)\boldsymbol{\omega}^{H}(\xi)\right)\label{eq:P1_obj}  \\
\mathrm{s.t.}\quad
&\!\!\!\!\!\Pr\!\left\{D_k(\tilde{t}_{\xi,r})\!>\!D_k^{\max}\right\}
\!\leq\! \varepsilon_k,\forall k, \label{eq:P1_delay} \\
& \!\!\!\!\boldsymbol{\Lambda}_m\!\!\left(t_{\xi,{n_{\mathrm{s},m}}}\!\right)\in\mathcal{C}^{\mathrm{s}},
\forall m,\xi, \label{eq:P1_Tx_region} \\
& \!\!\!\!\boldsymbol{\Omega}_k\left(t_{\xi,{n_{\mathrm{r},k}}}\right)\in\mathcal{C}^{\mathrm{r}}_k,
\forall k,\xi, \label{eq:P1_Rx_region} \\
& \!\!\!\!\|\boldsymbol{\Lambda}_i(\xi)\!\!-\!\!\boldsymbol{\Lambda}_j(\xi)\|_2\geq D,\forall i\neq j,\xi, \label{eq:P1_spacing} \\
&
\!\!\!\! \boldsymbol{\Lambda}_m(t_{\xi+1,0})=\boldsymbol{\Lambda}_m\!(t_{\xi,{n_{\mathrm{s},m}}}),
\forall m,\xi, \label{eq:P1_Tx_continuity}\\
&
\!\!\!\! 
\boldsymbol{\Omega}_k(t_{\xi+1,0})
=
\boldsymbol{\Omega}_k(t_{\xi,{n_{\mathrm{r},k}}}),
\forall k,\xi, \label{eq:P1_Rx_continuity}\\
&
\begin{aligned}[t]
& 
\!\!\!\! 
Q_k(t_{\xi+1,0})
\!\!=\!\!
\Big[Q_k(t_{\xi,0})
\!+\!A_k(t_{\xi,0},t_{\xi+1,0})
\\[-0.8ex]
&
\!\!\!\! 
-S_k(t_{\xi,0},t_{\xi+1,0})\Big]^+,
\forall k,\xi.
\end{aligned}
\raisetag{3.5ex}
\label{eq:P1_queue}
\end{align}
\end{subequations} \vspace{-0.25cm}\par\noindent In \probref{prob:P1}, constraint \eqref{eq:P1_delay} imposes the statistical delay-outage requirement in \eqref{eq:delay_outage_constraint}, constraints \eqref{eq:P1_Tx_region}–\eqref{eq:P1_spacing} specify the maximum MA movement regions and the minimum spacing between Tx-MAs, constraints \eqref{eq:P1_Tx_continuity}–\eqref{eq:P1_Rx_continuity} enforce the position continuity in \eqref{eq:MA_position_continuity_model}, and constraint \eqref{eq:P1_queue} characterizes the queue evolution across consecutive transmission blocks.

However, problem ~\probref{prob:P1} is challenging to solve optimally due to its stochastic cross-layer and temporal coupling. The probabilistic delay constraint in \eqref{eq:P1_delay} depends not only on the current physical-layer service rate, but also on MAC-layer traffic arrivals, inherited queue backlog, and DT-phase queue evolution. 
Moreover, \probref{prob:P1} is temporally coupled, as the objective is the long-term
average power, whereas MA positions and queue backlog in different blocks are
coupled, as seen from \eqref{eq:P1_Tx_continuity}–\eqref{eq:P1_queue}. Next, we first address the probabilistic delay-outage constraints in \eqref{eq:P1_delay} by developing a deterministic sufficient condition.

\vspace{0.2cm}
\section{Delay-Outage Analysis}
\vspace{-0.1cm}
To tackle the probabilistic delay-outage constraints,
we first introduce several concepts and basic properties
of martingales that will be used in the subsequent derivations. Let $\{\mathcal{F}_n\}_{n\geq 0}$ be a filtration, i.e., an increasing sequence of $\sigma$-algebras satisfying $\mathcal{F}_n\!\subseteq\!\mathcal{F}_{n+1}$ for all $n\!\geq \!0$.
A stochastic process $\{X(n)\}_{n\geq 0}$ is said to be adapted to $\{\mathcal{F}_n\}_{n\geq 0}$ if
$X(n)$ is $\mathcal{F}_n$-measurable for every $n\geq 0$.

\begin{definition}[Martingale] Let $\{X(n)\}_{n\geq 0}$ be adapted to $\{\mathcal{F}_n\}_{n\geq 0}$. If it satisfies
\begin{equation}
\mathbb{E}\!\left[|X(n)|\right]<\infty,
\,\,\,
\mathbb{E}\!\left[X(n+1)\mid\mathcal{F}_n\right]=X(n),
\,\,\, \forall n\geq0.
\label{eq:basic_martingale_definition}
\end{equation}
then $\{X(n)\}_{n\geq 0}$ is called a martingale.
\label{def:martingale_preliminary}
\end{definition}

We will also use the notions of a stopping time and the optional
stopping theorem. A random variable
$\tau\in\mathbb{Z}_{\geq 0}\cup\{\infty\}$ is a stopping time
with respect to $\{\mathcal{F}_n\}_{n\geq 0}$ if
$\{\tau\leq n\}\in\mathcal{F}_n$ for every $n\geq 0$.
For a martingale $\{X(n)\}_{n\geq 0}$ and any bounded stopping
time $\tau$, the optional stopping theorem gives
\begin{equation}
    \mathbb{E}[X(\tau)]=\mathbb{E}[X(0)].
\end{equation}

Based on these definitions and properties, we next construct a martingale
for the downlink queue backlog and use it to derive the
delay-outage probability.

\begin{figure*}[!b]
\hrule
\vspace{-15pt}
\noindent
\vspace{0pt}

\begin{flalign*}
&\Pr\!\left\{
D_k^{\mathrm{queue}}(\tilde{t}_{\xi,r})
>D_k^{\mathrm{rem}}(\xi)
\right\}
\overset{\mathrm{(a)}}{=}
\Pr\!\left\{
B_k(\xi)
+A_k\!\left(
\tilde{t}_{\xi,0},
\tilde{t}_{\xi,r-D_k^{\mathrm{rem}}(\xi)}
\right)
\geq
\chi_k\!\left(
\tilde{t}_{\xi,0},
\tilde{t}_{\xi,r}
\right)
\right\}
&&\\
&\!\overset{\mathrm{(b)}}{\leq}\!
\Pr\!\left\{
\sup_{0\leq r'\leq r-D_k^{\mathrm{rem}}(\xi)}
\left[
B_k(\xi)
+A_k\!\left(
\tilde{t}_{\xi,r'},
\tilde{t}_{\xi,r-D_k^{\mathrm{rem}}(\xi)}
\right)
-S_k\!\left(
\tilde{t}_{\xi,r'},
\tilde{t}_{\xi,r}
\right)
\right]
\geq 0
\right\}
&&\\
&\!\overset{\mathrm{(c)}}{=}\!
\Pr\!\left\{
\sup_{0\leq r'\leq r-D_k^{\mathrm{rem}}(\xi)}
\left[
B_k(\xi)
+A_k\!\left(
\tilde{t}_{\xi,r'},
\tilde{t}_{\xi,r-D_k^{\mathrm{rem}}(\xi)}
\right)
-\left(r-r'\right)\hat{C}_k(\xi)
\right]
\geq 0
\right\}, \,\,\, \mathrm{let}\,\,\, u=r-r',
&&\\
&\!\overset{\mathrm{(d)}}{=}\!
\Pr\!\left\{\!
\sup_{D_k^{\mathrm{rem}}\!(\xi)\leq u\leq r}\!\!
\left[
B_k(\xi)
\!+\!\!A_k\!\!\left(\!
\tilde{t}_{\xi,r-u},
\tilde{t}_{\xi,r-\!D_k^{\mathrm{rem}}\!(\xi)}
\!\right)
\!\!-\!u\hat{C}_{\!k}(\xi)
\right]
\!\!\geq \!0\!
\right\}
\! \overset{\mathrm{(e)}}{\leq} \!
\Pr\!\left\{\!
\sup_{u\geq D_k^{\mathrm{rem}}\!(\xi)}\!\!
\left[
B_k(\xi)
\!+\!\!A_k\!\!\left(\!
\tilde{t}_{\xi,D_k^{\mathrm{rem}}(\xi)},
\tilde{t}_{\xi,u}\!
\right)
\!\!-\!u\hat{C}_{\!k}(\xi)
\right]
\!\!\geq \!0\!
\right\}
&&\\
&\!\overset{\mathrm{(f)}}{=}\!
\Pr\!\left\{\!
\sup_{u\geq D_k^{\mathrm{rem}}(\xi)}
\left[
B_k(\xi)
+A_{k}\!\left(
\tilde{t}_{\xi,D_k^{\mathrm{rem}}(\xi)},
\tilde{t}_{\xi,u}\!
\right)
\!-\!\left(u\!-\!D_k^{\mathrm{rem}}(\xi)\right)\hat{C}_{k}(\xi)
\right]
\! \geq\!
D_k^{\mathrm{rem}}(\xi)\hat{C}_k(\xi)
\right\} , \,\,\, \mathrm{let} \,\,\,  v=u-D_k^{\mathrm{rem}}(\xi),
&&\\
&\!\overset{\mathrm{(g)}}{=}\!
\Pr\!\left\{
\sup_{v\geq 0}
\left[
B_k(\xi)
+
A_k\left(
\tilde{t}_{\xi,D_k^{\mathrm{rem}}(\xi)},
\tilde{t}_{\xi,D_k^{\mathrm{rem}}(\xi)+v}
\right)
-
v\hat{C}_{k}(\xi)
\right]
\geq
D_k^{\mathrm{rem}}(\xi)\hat{C}_{k}(\xi)
\right\}
&&\\
&\!\overset{\mathrm{(h)}}{=}
\Pr\!\left\{
\sup_{r\geq 0}
\left[
B_k(\xi)
+
A_k\left(
\tilde{t}_{\xi,0},
\tilde{t}_{\xi,r}
\right)
-
r\hat{C}_{\!k}(\xi)
\right]
\geq
D_k^{\mathrm{rem}}(\xi)\hat{C}_{k}(\xi)
\right\}.
&&
\tag{\getrefnumber{eqnum:delay_event_to_queue_pressure}}
\label{eq:delay_event_to_queue_pressure}
\end{flalign*}
\end{figure*}

\vspace{-0.15cm}
\subsection{{Backlog Martingale} Analysis}
\vspace{-0.05cm}
Using the backlog $B_k(\xi)$ defined in \eqref{eq:start_backlog}, we characterize
the stochastic evolution of the downlink queue through the associated net-input process
$B_k(\xi)+A_k(\tilde{t}_{\xi,0},\tilde{t}_{\xi,r})
-r\hat{C}_k(\xi)$.
Based on this process, the following proposition constructs a martingale for the subsequent delay-outage analysis.

\begin{proposition}[Backlog Martingale]
For user $k$ in block $\xi$, there exist a positive constant
$\zeta_k^\star$ and a positive function $\Phi_k(\cdot)$ such that the process
\vspace{-0.05cm}
\begin{equation}
M_{k}^{L}(\tilde{t}_{\xi,r})
\!\!=\!\!
\frac{\Phi_{k}(Z_k(\tilde{t}_{\xi,r}))}
{\Phi_{k}(Z_k(\tilde{t}_{\xi,0}))}
e^{\zeta_{k}^{\star}
\left[
{B_k(\xi)}
+
A_k(\tilde{t}_{\xi,0},\tilde{t}_{\xi,r})
-
r\hat{C}_{k}(\xi)
\right]},
\vspace{-0.05cm}
\label{eq:backlog_martingale}
\end{equation}
is a martingale adapted to the filtration
$\{\mathcal{F}_{\tilde{t}_{\xi,r}}\}_{r\geq0}$.
\label{prop:backlog_martingale}
\end{proposition}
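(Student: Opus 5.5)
The plan is to use the standard Perron--Frobenius construction for Markov-additive processes, as in \cite{poloczek2014envelope,poloczek2015service}. Inside the DT phase of block $\xi$ the quantities $B_k(\xi)$ and $\hat{C}_k(\xi)$ are fixed. Specifically, $\hat{C}_k(\xi)$ is determined by the block-level positions and beamformers, and $B_k(\xi)$ is $\mathcal{F}_{\tilde{t}_{\xi,0}}$-measurable. Hence the exponent increases by $\zeta a_k(\tilde{t}_{\xi,r})-\zeta\hat{C}_k(\xi)$ per slot, where $a_k(\tilde{t}_{\xi,r})=Z_k(\tilde{t}_{\xi,r})R_k$ is driven by the two-state chain $\mathbf{P}_k$.

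For $\zeta>0$ we define the tilted transition matrix
\begin{equation*}
\mathbf{T}_k(\zeta)=\mathbf{P}_k\,\mathrm{diag}\bigl(1,e^{\zeta R_k}\bigr),
\end{equation*}
whose entries $P_k(i,j)e^{\zeta R_k j}$ weight the arrival generated in the destination state. (If the arrival in slot $r$ is indexed by the current state rather than the next one, we use $\mathrm{diag}(1,e^{\zeta R_k})\mathbf{P}_k$ instead; the argument is the same.) For $0<p_k,q_k<1$ this matrix is nonnegative and irreducible. By Perron--Frobenius its spectral radius $sp(\mathbf{T}_k(\zeta))$ is a simple eigenvalue with a strictly positive right eigenvector $(\Phi_k(0),\Phi_k(1))^T$. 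We then choose $\zeta_k^\star$ so that
\begin{equation*}
sp\bigl(\mathbf{T}_k(\zeta_k^\star)\bigr)=e^{\zeta_k^\star\hat{C}_k(\xi)}.
\end{equation*}
Given this choice, the martingale property follows from a one-step computation. We write $M_k^L(\tilde{t}_{\xi,r+1})=M_k^L(\tilde{t}_{\xi,r})\cdot\frac{\Phi_k(Z_{r+1})}{\Phi_k(Z_r)}e^{\zeta_k^\star(a_{r+1}-\hat{C}_k(\xi))}$ and condition on $\mathcal{F}_{\tilde{t}_{\xi,r}}$. The Markov property reduces the conditional expectation to $\sum_j P_k(Z_r,j)\Phi_k(j)e^{\zeta_k^\star R_k j}=sp\,\Phi_k(Z_r)$, and the defining equation of $\zeta_k^\star$ cancels this against $e^{\zeta_k^\star\hat{C}_k(\xi)}$.

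Integrability is immediate. For fixed $r$ the arrivals are bounded by $rR_k$, $\Phi_k$ is bounded away from $0$ and $\infty$ on the two states, and $B_k(\xi)$ has finite exponential moments. The latter is inherited from the stability of the queue under the same drift condition, or it can simply be absorbed by conditioning on $\mathcal{F}_{\tilde{t}_{\xi,0}}$.

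The main obstacle is the existence of a positive root $\zeta_k^\star$. Define $g(\zeta)=\log sp(\mathbf{T}_k(\zeta))-\zeta\hat{C}_k(\xi)$. Then $g(0)=0$, since $\mathbf{P}_k$ is stochastic. The function $\zeta\mapsto\log sp(\mathbf{T}_k(\zeta))$ is convex (Kingman's theorem on log-convexity of the spectral radius), so $g$ is convex. Its derivative at $0$ is $\pi_{k,\mathrm{on}}R_k-\hat{C}_k(\xi)$, which is negative under the stability condition $\hat{C}_k(\xi)>\pi_{k,\mathrm{on}}R_k$. As $\zeta\to\infty$ we have $\log sp\sim\zeta R_k+\log(1-q_k)$, so $g(\zeta)\to+\infty$ provided $R_k>\hat{C}_k(\xi)$. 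Continuity and convexity then give a unique $\zeta_k^\star>0$. I would state both inequalities explicitly as the standing assumptions under which the proposition holds. In the degenerate case $\hat{C}_k\ge R_k$ the queue never grows, and any large $\zeta$ serves as a limiting case.

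Finally, for the two-state chain $sp$ can be written in closed form: with $e=e^{\zeta R_k}$ it is the largest root of $\lambda^2-\bigl((1-p_k)+(1-q_k)e\bigr)\lambda+(1-p_k-q_k)e=0$. This makes $\zeta_k^\star$ computable by a one-dimensional search and gives $\Phi_k$ explicitly, for example $\Phi_k(0)=1$ and $\Phi_k(1)=(sp-1+p_k)/p_k>0$.
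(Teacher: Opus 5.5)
Your proposal is correct and follows essentially the same route as the paper: your condition $sp(\mathbf{P}_k\mathrm{diag}(1,e^{\zeta R_k}))=e^{\zeta\hat{C}_k(\xi)}$ is exactly the paper's condition $sp((\mathbf{P}_k)^{\zeta_k})=1$ on the tilted matrix, $\Phi_k$ is its Perron--Frobenius right eigenvector, and the martingale property follows from a one-step conditional-expectation check. Your convexity argument (Kingman's theorem, slope $\pi_{k,\mathrm{on}}R_k-\hat{C}_k(\xi)$ at $0$, growth as $\zeta\to\infty$) supplies existence and uniqueness of $\zeta_k^\star$ under $\pi_{k,\mathrm{on}}R_k<\hat{C}_k(\xi)<R_k$, which the paper only asserts; one small slip is that your closed form $\Phi_k(1)=(sp-1+p_k)/p_k$ is the eigenvector of $\mathrm{diag}(1,e^{\zeta R_k})\mathbf{P}_k$ (the current-state convention, which is the one matching the paper's definition of $A_k$), whereas for $\mathbf{P}_k\mathrm{diag}(1,e^{\zeta R_k})$ it must be divided by $e^{\zeta R_k}$.
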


\begin{proof}
According to Definition~\ref{def:martingale_preliminary}, it suffices to verify the one-step martingale condition
\begin{equation}
\vspace{-0.05cm}
\mathbb{E}
\left[
M_k^L(\tilde{t}_{\xi,r+1})
\mid
\mathcal{F}_{\tilde{t}_{\xi,r}}
\right]
=
M_k^L(\tilde{t}_{\xi,r}).
\label{eq:one_step_martingale_condition}
\vspace{-0.05cm}
\end{equation}

Conditioned on the current state
$Z_{k}(\tilde{t}_{\xi,r})=i$, the transition probability of the MMOO process is $\Pr\!\left\{
Z_k(\tilde{t}_{\xi,r+1})\!=\!j\mid Z_k(\tilde{t}_{\xi,r})\!=\!i\right\}\!=\![\mathbf{P}_{\!k}]_{i,j}$ for $i,j \!\in\!\{0,1\}$.
Define $f_{k}(j)\triangleq jR_k$, $j\in\{0,1\}$, so that the corresponding one-slot net-input increment is
$f_{k}(j)-\hat{C}_k(\xi)$. Therefore,
\vspace{-0.05cm}
\begin{equation}
\begin{aligned}
&\mathbb{E}
\left[
M_k^L(\tilde{t}_{\xi,r+1})
\mid
\mathcal{F}_{\tilde{t}_{\xi,r}}
\right]
\\
\!\!&=\!\!
\frac{
e^{\zeta_k
\!\left[
{B_{k}\!(\xi)}
+\!
A_k\!(\tilde{t}_{\xi,0},\tilde{t}_{\xi,r})
-
r\hat{C}_{\!k}\!(\xi)
\right]}
}{
\Phi_k(Z_k(\tilde{t}_{\xi,0}))
}\!\!\times\!\!\!
\sum_{j=0}^{1}
[\mathbf{P}_{\!k}]_{i,j}
e^{\zeta_{k}[f_{k}\!(j)-\hat{C}_{\!k}\!(\xi)]}
\Phi_{\!k}\!(j).
\end{aligned}
\end{equation}
\vspace{-0.05cm}
\noindent Hence, condition \eqref{eq:one_step_martingale_condition} holds provided that
\vspace{-0.05cm}
\begin{equation}
\sum_{j=0}^{1}
[\mathbf{P}_k]_{i,j}
e^{\zeta_k[f_{k}(j)-\hat{C}_k(\xi)]}
\Phi_k(j)
=
\Phi_k(i),
\quad i\in\{0,1\}.
\label{eq:martingale_condition}
\vspace{-0.05cm}
\end{equation} To construct the desired $\zeta_k^\star$ and $\Phi_k(\cdot)$ satisfying \eqref{eq:martingale_condition}, define the exponentially tilted transition matrix as
\begin{equation}
\vspace{-0.025cm}
\left[
\left(\mathbf{P}_k\right)^{\zeta_k}
\right]_{i,j}
=
[\mathbf{P}_k]_{i,j}
e^{\zeta_k[f_{k}(j)-\hat{C}_{k}(\xi)]},
\quad i,j\in\{0,1\}.
\label{eq:tilted_transition_matrix}
\vspace{-0.025cm}
\end{equation}
For the two-state MMOO process,
\begin{equation}
\left(\mathbf{P}_k\right)^{\zeta_k}
\!=\!
\begin{bmatrix}
(1-p_k)e^{-\zeta_k\hat{C}_k(\xi)}\!\!\!\!
&
p_k e^{\zeta_k(R_k-\hat{C}_k(\xi))}
\\
q_k e^{-\zeta_k\hat{C}_k(\xi)}\!\!\!\!
&
(1-q_k)e^{\zeta_k(R_k-\hat{C}_k(\xi))}
\end{bmatrix}\!.
\label{eq:MMOO_tilted_matrix}
\end{equation}
Since $(\mathbf{P}_k)^{\zeta_k}$ is an irreducible non-negative matrix, the Perron--Frobenius theorem guarantees that $sp((\mathbf{P}_k)^{\zeta_k})$ is a simple eigenvalue with a strictly positive right eigenvector. We choose $\zeta_k^\star>0$ as the unique positive solution to $sp((\mathbf{P}_k)^{\zeta_k})\!=\!1$, which is equivalent to $\det\!\left((\mathbf{P}_k)^{\zeta_k}-\mathbf{I}\right)\!=\!0,$ i.e.,
\begin{equation}
\vspace{0.05cm}
\zeta_k^\star
=
\left\{
\zeta_k>0:
\begin{aligned}
&
\left[
(1-p_k)e^{-\zeta_k \hat{C}_{k}(\xi)}-1
\right]\times
\\
&
\left[
(1-q_k)e^{\zeta_k(R_k-\hat{C}_{k}(\xi))}-1
\right]
\\
&
-p_k q_k
e^{-\zeta_k \hat{C}_{k}(\xi)}
e^{\zeta_k(R_k-\hat{C}_{k}(\xi))}
\!=\!0
\end{aligned}
\right\}\!.
\end{equation}
\vspace{0.05cm}
Let $\boldsymbol{\Phi}_k=[\Phi_k(0),\Phi_k(1)]^T$ be the corresponding positive right eigenvector, i.e.,$(\mathbf{P}_k)^{\zeta_k^\star}\boldsymbol{\Phi}_k=\boldsymbol{\Phi}_k .$
Then, based on \eqref{eq:tilted_transition_matrix}, this eigenvector equation is exactly \eqref{eq:martingale_condition} with $\zeta_k=\zeta_k^\star$.
Substituting this $\zeta_k^\star$ and the corresponding $\Phi_k(\cdot)$ into \eqref{eq:backlog_martingale}, the one-step condition \eqref{eq:one_step_martingale_condition} is satisfied. Therefore, $M_k^L(\tilde{t}_{\xi,r})$ is a martingale. This completes the
proof.
\end{proof}

Based on Proposition~\ref{prop:backlog_martingale}, we next derive an upper bound on the {delay-outage} probability. According to the remaining delay budget $D_k^{\mathrm{rem}}(\xi)$ defined in \eqref{eq:remaining_delay_budget}, a delay outage occurs if the DT-phase queueing delay exceeds it, i.e.,
\begin{equation}
D_k(\tilde{t}_{\xi,r})>D_k^{\max}
\Longleftrightarrow
D_k^{\mathrm{queue}}(\tilde{t}_{\xi,r})
>
D_k^{\mathrm{rem}}(\xi).
\end{equation}
\noindent Define the stopping time as the first time when the net-input
process crosses the threshold $D_k^{\mathrm{rem}}(\xi)\hat{C}_k(\xi)$, i.e.,
\begin{equation}
\vspace{0.05cm}
r_k^{\mathrm{stop}}
\!\!\!=\!\!
\inf\!\left\{\!
r\!>\!0\!:\!
{B_k\!(\xi)}
\!\!+\!\!
A_k\!(\tilde{t}_{\xi,0},\tilde{t}_{\xi,r})
\!\!-\!
r\hat{C}_{\!k}\!(\xi)
\!\!\geq\!\!
D_k^{\mathrm{rem}}\!(\xi)\hat{C}_{\!k}\!(\xi)
\!\right\}\!.
\label{eq:stopping_time_delay}
\end{equation}
\vspace{0.05cm}
According to the DT-phase queueing delay definition in \eqref{eq:queueing-delay definition}, the event $D_k^{\mathrm{queue}}(\tilde{t}_{\xi,r})>D_k^{\mathrm{rem}}(\xi)$ means that the DT-phase arrivals, including the initial service debt $B_k(\xi)$, cannot be cleared within the remaining delay budget. This yields the net-input process overflow event in equality (a) in \eqref{eq:delay_event_to_queue_pressure}, as shown at the bottom of next page. The inequality in (b) is due to the enlarged event space by considering all admissible starting slots $r'$ over $0\leq r'\leq r-D_k^{\mathrm{rem}}(\xi)$. Substituting
$S_k(\tilde{t}_{\xi,r'},\tilde{t}_{\xi,r})
=(r-r')\hat{C}_k(\xi)$ gives (c), and the change of variable
$u=r-r'$ leads to (d) , where $D_k^{\mathrm{rem}}(\xi)\leq u\leq r$. In (e), stationarity of the arrival
process allows the interval of length
$u-D_k^{\mathrm{rem}}(\xi)$ to be shifted to
$[\tilde{t}_{\xi,D_k^{\mathrm{rem}}(\xi)},\tilde{t}_{\xi,u})$;
extending the supremum from
$D_k^{\mathrm{rem}}(\xi)\leq u\leq r$ to
$u\geq D_k^{\mathrm{rem}}(\xi)$ then yields the upper bound.
Equality (f) follows by decomposing
$u\hat{C}_k(\xi)$ into
$(u-D_k^{\mathrm{rem}}(\xi))\hat{C}_k(\xi)
+D_k^{\mathrm{rem}}(\xi)\hat{C}_k(\xi)$, thereby isolating the
crossing threshold
$D_k^{\mathrm{rem}}(\xi)\hat{C}_k(\xi)$.
Setting $v=u-D_k^{\mathrm{rem}}(\xi)$ gives (g), while resetting
the time origin and relabeling the dummy variable $v$ as $r$
gives (h).

\refstepcounter{equation}
\label{eqnum:delay_event_to_queue_pressure}
\refstepcounter{equation}
\label{eqnum:stopping_inequality_delay} 

\begin{theorem}[ {Delay-Outage} Probability Bound]\label{thm:delay_outage_probability_bound}
For user $k$ in block $\xi$, under the conditions of Proposition~\ref{prop:backlog_martingale}, the {delay-outage} probability satisfies
\begin{equation}
\vspace{-0.1cm}
\Pr\!\left\{\!
D_k(\tilde{t}_{\xi,r})\!\!>\!\!D_k^{\max}\!
\right\}
\!\!\leq\!\!
\frac{\Phi_{k,\max}}{\Phi_{k,\min}}
e^{\!
-\zeta_k^\star
\left[
D_k^{\mathrm{rem}}(\xi)\hat{C}_k(\xi)
-\!
{B_k(\xi)}
\right]},
\label{eq:delay_violation_bound}
\end{equation}
where
\begin{equation}
\vspace{-0.025cm}
\Phi_{k,\min}=\min_{z\in\{0,1\}}\{\Phi_k(z)\}, \Phi_{k,\max}=\max_{z\in\{0,1\}}\{\Phi_k(z)\}.
\end{equation}
\end{theorem}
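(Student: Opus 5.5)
The plan is to start from the chain \eqref{eq:delay_event_to_queue_pressure}. It already reduces the delay-outage event to a supremum-crossing event for the net-input process $X(r)\triangleq B_k(\xi)+A_k(\tilde{t}_{\xi,0},\tilde{t}_{\xi,r})-r\hat{C}_k(\xi)$ at the level $\theta\triangleq D_k^{\mathrm{rem}}(\xi)\hat{C}_k(\xi)$. The crossing probability will then be bounded with the martingale of Proposition~\ref{prop:backlog_martingale}, the stopping time $r_k^{\mathrm{stop}}$ in \eqref{eq:stopping_time_delay}, and optional stopping. Concretely, $\{\sup_{r\ge0}X(r)\ge\theta\}$ coincides with $\{r_k^{\mathrm{stop}}<\infty\}$. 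The case $X(0)=B_k(\xi)\ge\theta$ makes the bound trivial, because the right-hand side is then at least $\Phi_{k,\max}/\Phi_{k,\min}\ge1$. Throughout, $B_k(\xi)$ is treated as given, i.e., the analysis is conditioned on it.

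For a fixed horizon $n$, I will apply the optional stopping theorem to the bounded stopping time $r_k^{\mathrm{stop}}\wedge n$, which gives
\begin{equation*}
\mathbb{E}\big[M_k^L(\tilde{t}_{\xi,r_k^{\mathrm{stop}}\wedge n})\big]=\mathbb{E}\big[M_k^L(\tilde{t}_{\xi,0})\big]=e^{\zeta_k^\star B_k(\xi)} .
\end{equation*}
Since $M_k^L\ge0$, I keep only the event $\{r_k^{\mathrm{stop}}\le n\}$. On that event $X(r_k^{\mathrm{stop}})\ge\theta$, and the eigenvector ratio is at least $\Phi_{k,\min}/\Phi_{k,\max}$. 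Hence
\begin{equation*}
\frac{\Phi_{k,\min}}{\Phi_{k,\max}}\,e^{\zeta_k^\star\theta}\Pr\{r_k^{\mathrm{stop}}\le n\}\le e^{\zeta_k^\star B_k(\xi)} .
\end{equation*}
Rearranging yields $\Pr\{r_k^{\mathrm{stop}}\le n\}\le\frac{\Phi_{k,\max}}{\Phi_{k,\min}}e^{-\zeta_k^\star[\theta-B_k(\xi)]}$. Letting $n\to\infty$ and using monotone continuity of probability then gives the bound on $\Pr\{r_k^{\mathrm{stop}}<\infty\}$. Combining this with steps (a)--(h) of \eqref{eq:delay_event_to_queue_pressure} and the equivalence $D_k>D_k^{\max}\Leftrightarrow D_k^{\mathrm{queue}}>D_k^{\mathrm{rem}}$ produces \eqref{eq:delay_violation_bound}.

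The main obstacle is the bookkeeping around the normalization and the randomness of $B_k(\xi)$. Because $M_k^L$ is divided by $\Phi_k(Z_k(\tilde{t}_{\xi,0}))$, its initial value is exactly $e^{\zeta_k^\star B_k(\xi)}$, but only conditionally on $\mathcal{F}_{\tilde{t}_{\xi,0}}$. I would therefore carry out the optional-stopping argument conditionally, treating $B_k(\xi)$ and $Z_k(\tilde{t}_{\xi,0})$ as $\mathcal{F}_{\tilde{t}_{\xi,0}}$-measurable, and state the bound for the given backlog, which is how it is used later as a deterministic service-rate condition. A secondary point is that the horizon in the DT phase is finite ($r\le N_{\mathrm{DT}}(\xi)$). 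This causes no difficulty, since step (e) already extends the supremum to all $r\ge0$, which can only enlarge the event.
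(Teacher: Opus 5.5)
Your proposal is correct and follows the paper's proof essentially step by step. Both reduce via the chain (a)--(h) to $\Pr\{r_k^{\mathrm{stop}}<\infty\}$, apply optional stopping to $r_k^{\mathrm{stop}}\wedge n$, drop the nonnegative term on the non-stopping event, lower-bound $M_k^L$ at the crossing, and let $n\to\infty$. Your deviations are minor tightenings of the same argument rather than a different route: you bound $\Phi_k(Z_k(\tilde{t}_{\xi,0}))$ by $\Phi_{k,\max}$ immediately, you keep $\{r_k^{\mathrm{stop}}\le n\}$, you condition on $\mathcal{F}_{\tilde{t}_{\xi,0}}$ so that $\mathbb{E}[M_k^L(\tilde{t}_{\xi,0})]=e^{\zeta_k^\star B_k(\xi)}$ is actually justified, and you dispose of the case $B_k(\xi)\ge D_k^{\mathrm{rem}}(\xi)\hat{C}_k(\xi)$.
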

\vspace{-0.05cm}
\begin{proof}
From \eqref{eq:delay_event_to_queue_pressure}, the DT-phase queueing {delay-outage} is upper bounded by the first crossing event of the net-input
process. Hence,
\begin{equation}
\vspace{-0.05cm}
\Pr\!\left\{
D_k^{\mathrm{queue}}(\tilde{t}_{\xi,r})
>
D_k^{\mathrm{rem}}(\xi)
\right\}
\leq
\Pr\!\left\{
r_k^{\mathrm{stop}}<\infty
\right\}.
\label{eq:delay_to_stopping_event}
\end{equation} Applying the optional stopping theorem
to the
martingale stopped at $r_k^{\mathrm{stop}}\wedge r$ yields
\begin{equation}
\begin{aligned}
&\mathbb{E}\!\left[M_k^L(\tilde{t}_{\xi,0})\right]
=\mathbb{E}\!\left[M_k^L\!\!\left(\tilde{t}_{\xi,r_k^{\mathrm{stop}}\wedge r}\right)\right]
= e^{\zeta_k^\star {B_k(\xi)}}\\
&\!\!=\!\mathbb{E}\!\left[M_k^L\!\!\left(\tilde{t}_{\xi,r_k^{\mathrm{stop}}}\!\right)\!\mathbb{I}_{\{r_k^{\mathrm{stop}}\!<r\}}\right]
\!+\!\mathbb{E}\!\left[M_k^L\!\!\left(\tilde{t}_{\xi,r}\right)\!\mathbb{I}_{\{r_k^{\mathrm{stop}}\geq r\}}\right]\\
&\!\!\geq\!
\mathbb{E}\!\left[M_k^L\!\!\left(\tilde{t}_{\xi,r_k^{\mathrm{stop}}}\!\right)\!\mathbb{I}_{\{r_k^{\mathrm{stop}}\!<r\}}\right]\\
&\!\!\geq\!
\Pr\!\left\{r_k^{\mathrm{stop}}\!\!<\!\!r\right\}
\frac{\Phi_{k,\min}}
{\Phi_k(Z_k(\tilde{t}_{\xi,0}))}
e^{\zeta_k^{\star}D_k^{\mathrm{rem}}(\xi)\hat{C}_k(\xi)}\!.
\end{aligned}
\label{eq:stopping_inequality_delay}
\end{equation} Rearranging it yields
\begin{equation}
\Pr\!\left\{
r_k^{\mathrm{stop}}\!<\!r
\right\}
\!\! \leq \!\!
\frac{\Phi_k(Z_k(\tilde{t}_{\xi,0}))}{\Phi_{k,\min}}
e^{
-\zeta_k^\star
\left[
D_k^{\mathrm{rem}}(\xi)\hat{C}_k(\xi)
-
{B_k(\xi)}
\right]}.
\label{eq:finite_stopping_bound}
\end{equation}
Letting $r\to\infty$ and using
$\Phi_k(Z_k(\tilde{t}_{\xi,0}))\leq\Phi_{k,\max}$, we have
\begin{equation}
\Pr\!\left\{
r_k^{\mathrm{stop}}<\infty
\right\}
\leq 
\frac{\Phi_{k,\max}}{\Phi_{k,\min}}
e^{
-\zeta_k^\star
\left[
D_k^{\mathrm{rem}}(\xi)\hat{C}_k(\xi)
-
{B_k(\xi)}
\right]}.
\label{eq:infinite_stopping_bound}
\end{equation}
Since $D_k(\tilde{t}_{\xi,r})>D_k^{\max}$ is equivalent to
$D_k^{\mathrm{queue}}(\tilde{t}_{\xi,r})>D_k^{\mathrm{rem}}(\xi)$, combining
\eqref{eq:delay_to_stopping_event} and \eqref{eq:infinite_stopping_bound} completes the proof.
\end{proof}

Note that although \eqref{eq:delay_violation_bound} provides an upper bound on the delay outage probability, it is dependent on $B_k(\xi)$, i.e., the
actual backlog at the beginning of the DT phase, which is random and unknown at the moment of MA position optimization. To tackle this challenge, we further develop a deterministic sufficient condition that ensures \eqref{eq:delay_violation_bound}.

\begin{lemma}\label{lem:high_probability_backlog_bound}
For user $k$ in block $\xi$, define an auxiliary quantity $\delta_k$ with $0<\delta_k<\varepsilon_k$, and
\begin{equation}
\widehat{A}_{k}(\xi)
\!=\!
\inf_{\theta>0}\!
\frac{
\ln\!
\left(\!
\mathbf{e}_{Z_k(t_{\xi,0})}^{T}\!
\!\left[
\mathbf{P}_k
\mathrm{diag}(1,e^{\theta R_k})
\right]^{{\!N_{\mathrm{mov}}}(\xi)}\!
\mathbf{1}
\right)
\!\!+\!
\ln(\frac{1}{\delta_k})
}
{\theta}.
\label{eq:chernoff_arrival_quantile_bound}
\end{equation} Let $\widehat{B}_{k}(\xi)
\triangleq
Q_k(t_{\xi,0})
+
\widehat{A}_{k}(\xi)
\label{eq:high_probability_backlog_definition}$. Then, the actual backlog at the beginning of the DT phase satisfies
\begin{equation}
\Pr
\left\{
B_k(\xi)>\widehat{B}_{k}(\xi)
\right\}
\leq
\delta_k .
\label{eq:backlog_exceedance_probability}
\end{equation}
\end{lemma}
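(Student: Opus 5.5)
Since $B_k(\xi)=Q_k(t_{\xi,0})+A_k(t_{\xi,0},t_{\xi,N_{\mathrm{mov}}(\xi)})$ and $Q_k(t_{\xi,0})$ is known at the start of block $\xi$, the event $\{B_k(\xi)>\widehat{B}_k(\xi)\}$ coincides with $\{A_k(t_{\xi,0},t_{\xi,N_{\mathrm{mov}}(\xi)})>\widehat{A}_k(\xi)\}$. It therefore suffices to bound the upper tail of the arrivals accumulated during the movement phase, conditioned on $\mathcal{F}_{t_{\xi,0}}$, and in particular on $Z_k(t_{\xi,0})$ and $N_{\mathrm{mov}}(\xi)$, which are determined by the chosen target positions. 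The plan is a Chernoff bound whose moment-generating function is computed exactly via the MMOO transition matrix.

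\emph{Step 1 (Chernoff).} For any $\theta>0$ and any threshold $x$, Markov's inequality gives
$\Pr\{A_k(t_{\xi,0},t_{\xi,N_{\mathrm{mov}}})>x\mid\mathcal{F}_{t_{\xi,0}}\}\le e^{-\theta x}\,\mathbb{E}[e^{\theta A_k}\mid\mathcal{F}_{t_{\xi,0}}]$.

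\emph{Step 2 (MGF via matrix product).} Write $A_k=\sum_{n=0}^{N_{\mathrm{mov}}-1}Z_k(t_{\xi,n})R_k$. By the Markov property, the conditional expectation can be expanded as a sum over state paths. Each path contributes the product of transition probabilities $[\mathbf{P}_k]_{i,j}$ multiplied by the weights $e^{\theta R_k z}$. Collecting these terms gives a product of $N_{\mathrm{mov}}$ matrices of the form $\mathbf{P}_k\mathrm{diag}(1,e^{\theta R_k})$, started from $\mathbf{e}_{Z_k(t_{\xi,0})}$ and summed with $\mathbf{1}$. This matches the expression in \eqref{eq:chernoff_arrival_quantile_bound}.

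The one point that needs care is the indexing: in \eqref{eq:chernoff_arrival_quantile_bound} the weight is applied after each transition. This would count the states $Z_k(t_{\xi,1}),\dots,Z_k(t_{\xi,N_{\mathrm{mov}}})$ rather than $Z_k(t_{\xi,0}),\dots,Z_k(t_{\xi,N_{\mathrm{mov}}-1})$. I would reconcile this in one of two ways. Either adopt the convention (consistent with the queue recursion) that the arrival in slot $n$ is generated after the state update. Or bound the mismatch by noting that the two products differ by one boundary factor. The state-$0$ versus state-$1$ start can be handled by monotonicity, or simply by adopting the paper's convention.

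This indexing step is the main obstacle. If the conventions do not align, I would first try redefining the slot-$n$ arrival as $Z_k(t_{\xi,n+1})R_k$, which is admissible because it is the timing used implicitly in the stationarity argument (e) of \eqref{eq:delay_event_to_queue_pressure}.

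\emph{Step 3 (choose the threshold).} Set $x=x(\theta)=\big[\ln(\mathbf{e}^T_{Z}[\cdot]^{N_{\mathrm{mov}}}\mathbf{1})+\ln(1/\delta_k)\big]/\theta$. Then the Chernoff bound equals exactly $\delta_k$ for every $\theta>0$. Since $\widehat{A}_k(\xi)=\inf_\theta x(\theta)$, the event $\{A_k>\widehat{A}_k\}$ is the union over $\theta$ of the events $\{A_k>x(\theta)\}$.

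To avoid a union bound over $\theta$, observe that $A_k$ takes values in the finite lattice $R_k\mathbb{Z}_{\ge0}$. I would handle the infimum in one of two ways. Either note that $x(\theta)$ is continuous and tends to $\infty$ as $\theta\to0^+$ and as $\theta\to\infty$ (the latter because the log term grows at most like $\theta R_kN_{\mathrm{mov}}$ while $\ln(1/\delta_k)/\theta\to0$; if the infimum is then only approached asymptotically, use a limit argument). Or, more simply, write $\{A_k>\widehat A_k\}=\bigcup_{m}\{A_k>x(\theta_m)\}$ as an increasing union along a sequence $\theta_m$ with $x(\theta_m)\downarrow\widehat A_k$. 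Continuity of probability then gives $\Pr\le\delta_k$.

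Finally, take the expectation over $\mathcal{F}_{t_{\xi,0}}$, or keep the bound conditional, to obtain \eqref{eq:backlog_exceedance_probability}.
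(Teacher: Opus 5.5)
Your proposal is correct and follows the paper's proof: a Chernoff bound conditioned on $Z_k(t_{\xi,0})$ with the MMOO matrix-product MGF, a threshold $x(\theta)$ chosen so the bound equals $\delta_k$, the infimum over $\theta$, and total probability. It is also more careful than the paper on the two points the paper glosses over. Your increasing-union argument along $x(\theta_m)\downarrow\widehat{A}_k(\xi)$ is what actually justifies the paper's bare step of ``replacing $x$ by the infimum''. And you are right that $\mathbf{e}_{Z_k(t_{\xi,0})}^{T}[\mathbf{P}_k\mathrm{diag}(1,e^{\theta R_k})]^{N_{\mathrm{mov}}(\xi)}\mathbf{1}$ is the MGF of $R_k\sum_{n=1}^{N_{\mathrm{mov}}(\xi)}Z_k(t_{\xi,n})$, so the post-transition timing you adopt is genuinely needed; it is also the timing implicit in the tilted matrix of Proposition~\ref{prop:backlog_martingale}.

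Drop your fallback alternatives, though. For an ON start the matrix product is a lower bound on the true MGF under the literal convention, so no monotonicity argument rescues that convention. Also, as $\theta\to\infty$, $x(\theta)$ converges to the conditional essential supremum of the movement-phase arrivals (at most $R_kN_{\mathrm{mov}}(\xi)$), not to $\infty$.
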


\begin{proof}
According to \eqref{eq:start_backlog}, the backlog at the beginning of the DT
phase is $B_k(\xi)
=
Q_k(t_{\xi,0})
+
A_k(t_{\xi,0},\tilde{t}_{\xi,0}).$ For any $x>0$ and $\theta>0$, conditioning on the initial traffic state
$Z_k(t_{\xi,0})$ and applying the Chernoff bound yield
\begin{align}
&\Pr\!\left\{\!
A_k(t_{\xi,0},\tilde{t}_{\xi,0})\!\!>\!\!x
\middle|
Z_{k}(t_{\xi,0})\!
\right\}
\!\!\leq\!
e^{\!-\theta x}
\mathbb{E}\!\left[
e^{\theta \!A_k(t_{\xi,0},\tilde{t}_{\xi,0}\!)}
\middle|
Z_k(t_{\xi,0})
\!\right]
\nonumber\\
&=
e^{-\theta x}
\mathbf{e}_{Z_k(t_{\xi,0})}^{T}
\left[
\mathbf{P}_k
\mathrm{diag}(1,e^{\theta R_k})
\right]^{N_{\mathrm{mov}}(\xi)}
\mathbf{1}.
\label{eq:Chernoff movement arrivals}
\end{align}
If the right-hand side of
\eqref{eq:Chernoff movement arrivals} is no larger than
$\delta_k$, it is sufficient that
\begin{equation}
x\!\geq\!\frac{\ln\!\left(\!\mathbf{e}_{Z_k(t_{\xi,0})}^T\!\left[\mathbf{P}_k\mathrm{diag}(1,e^{\theta \!R_k})\right]^{\!N_\mathrm{mov}(\xi)}\!\mathbf{1}\!\right)\!+\!\ln(\frac{1}{\delta_k})}{\theta}.
\end{equation} Since \eqref{eq:Chernoff movement arrivals} holds for every
$\theta>0$, the tightest threshold of $x$ provided by this family of Chernoff bounds is \eqref{eq:chernoff_arrival_quantile_bound}. Consequently, by replacing $x$ in \eqref{eq:Chernoff movement arrivals}  with \eqref{eq:chernoff_arrival_quantile_bound}, we have
\begin{equation}
\Pr\!\left\{
A_k(t_{\xi,0},\tilde{t}_{\xi,0})
>
\widehat{A}_k(\xi)
\,\middle|\,
Z_k(t_{\xi,0})
\right\}
\leq \delta_k.
\label{eq:arrival_exceedance_probability}
\end{equation} Since \eqref{eq:arrival_exceedance_probability} holds for every realization of the initial traffic state $Z_k(t_{\xi,0})$, the law of total probability gives
\begin{align}
&\Pr\!\left\{
A_k(t_{\xi,0},\tilde{t}_{\xi,0})>\widehat{A}_k(\xi)
\right\}
\nonumber\\
&=\!\!\!\!\!\!
\sum_{z\in\{0,1\}}\!\!\!\!\!
\Pr\!\left\{\!
A_k(t_{\xi,0},\tilde{t}_{\xi,0})\!\!>\!\!\widehat{A}_k(\xi)
\middle|
Z_k(t_{\xi,0})\!=\!z\!
\right\}\!
\Pr\!\left\{Z_k(t_{\xi,0})\!=\!z\right\}
\nonumber\\
&\le
\delta_k\!\!\!\!\!
\sum_{z\in\{0,1\}}\!\!\!\!\!
\Pr\!\left\{Z_k(t_{\xi,0})\!=\!z\right\}
=\delta_k{.}
\label{eq:unconditional_arrival_exceedance}
\end{align}
Since
$
B_k(\xi)\!=\!Q_k(t_{\xi,0})\!+\!A_k(t_{\xi,0},\tilde{t}_{\xi,0})
$
and
$
\widehat{B}_k(\xi)\!=\!Q_k(t_{\xi,0})\!+\!\widehat{A}_k(\xi),
$
we obtain \eqref{eq:backlog_exceedance_probability}. This completes the proof.
\end{proof}

\begin{proposition}
[Deterministic Sufficient Condition]
\label{prop:deterministic_delay_constraint} If the following deterministic condition holds:
\begin{equation}
\frac{\Phi_{k,\max}}{\Phi_{k,\min}}
e^{-\zeta_k^\star
\left[
D_k^{\mathrm{rem}}(\xi)\hat{C}_k(\xi)
-\widehat{B}_{k}(\xi)
\right]}
\leq
\varepsilon_k-\delta_k ,
\label{eq:deterministic_delay_condition}
\end{equation}
then the delay-outage constraint 
$\Pr\!
\left\{
D_k(\tilde{t}_{\xi,r})\!>\!D_k^{\max}
\right\}
\!\!\leq\!
\varepsilon_k
\label{eq:delay_outage_constraint_original}$
is guaranteed.
\end{proposition}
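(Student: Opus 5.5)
The plan is to split the outage event according to whether the random initial backlog $B_k(\xi)$ exceeds its high-probability bound $\widehat{B}_k(\xi)$ from Lemma~\ref{lem:high_probability_backlog_bound}, and then apply the law of total probability. Let $\mathcal{E}_k\triangleq\{B_k(\xi)\le \widehat{B}_k(\xi)\}$. Then
\begin{equation*}
\Pr\{D_k(\tilde{t}_{\xi,r})>D_k^{\max}\}
\le
\Pr\{D_k(\tilde{t}_{\xi,r})>D_k^{\max},\,\mathcal{E}_k\}
+\Pr\{\mathcal{E}_k^c\}.
\end{equation*}
By Lemma~\ref{lem:high_probability_backlog_bound}, the second term is at most $\delta_k$. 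It therefore suffices to show that the first term is at most $\varepsilon_k-\delta_k$ whenever \eqref{eq:deterministic_delay_condition} holds.

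For the first term, I would condition on $\mathcal{F}_{\tilde{t}_{\xi,0}}$, i.e., on the realization of $Q_k(t_{\xi,0})$, the movement-phase arrivals, and $Z_k(\tilde{t}_{\xi,0})$. Given this information, $B_k(\xi)$ is a known constant, and the DT-phase analysis of Theorem~\ref{thm:delay_outage_probability_bound} applies verbatim. This is because the martingale of Proposition~\ref{prop:backlog_martingale} is built on increments after $\tilde{t}_{\xi,0}$, and the optional-stopping argument treats $B_k(\xi)$ as the deterministic initial value $e^{\zeta_k^\star B_k(\xi)}$. Hence
\begin{equation*}
\Pr\{D_k(\tilde{t}_{\xi,r})>D_k^{\max}\mid\mathcal{F}_{\tilde{t}_{\xi,0}}\}
\le \frac{\Phi_{k,\max}}{\Phi_{k,\min}}e^{-\zeta_k^\star[D_k^{\mathrm{rem}}(\xi)\hat{C}_k(\xi)-B_k(\xi)]}.
\end{equation*}
The right-hand side is increasing in $B_k(\xi)$. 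Moreover, $\zeta_k^\star$, $\Phi_k$, $\hat{C}_k(\xi)$ and $D_k^{\mathrm{rem}}(\xi)$ are fixed by the block-$\xi$ decisions and do not depend on $B_k(\xi)$. So on $\mathcal{E}_k$ we can replace $B_k(\xi)$ by $\widehat{B}_k(\xi)$, and by \eqref{eq:deterministic_delay_condition} the conditional probability is at most $\varepsilon_k-\delta_k$. Multiplying by $\mathbb{I}_{\mathcal{E}_k}$ (which is $\mathcal{F}_{\tilde{t}_{\xi,0}}$-measurable) and taking expectations gives $\Pr\{D_k>D_k^{\max},\mathcal{E}_k\}\le(\varepsilon_k-\delta_k)\Pr\{\mathcal{E}_k\}\le\varepsilon_k-\delta_k$. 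Adding the $\delta_k$ term yields the claim.

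The step I expect to need the most care is justifying the conditional use of Theorem~\ref{thm:delay_outage_probability_bound}. Its statement is written with $B_k(\xi)$ appearing as if deterministic. I must check two things. First, the optional-stopping identity $\mathbb{E}[M_k^L(\tilde{t}_{\xi,r_k^{\mathrm{stop}}\wedge r})\mid\mathcal{F}_{\tilde{t}_{\xi,0}}]=M_k^L(\tilde{t}_{\xi,0})$ holds conditionally, which follows because the martingale property is one-step and conditional. Second, $\widehat{B}_k(\xi)$ is deterministic given the information at the start of the block, since it depends only on $Q_k(t_{\xi,0})$, $Z_k(t_{\xi,0})$ and $N_{\mathrm{mov}}(\xi)$. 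Both the martingale bound and the Chernoff quantile of Lemma~\ref{lem:high_probability_backlog_bound} are conditional on this start-of-block information, so the union-bound split remains consistent.
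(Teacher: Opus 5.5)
Your proposal is correct and follows the paper's proof essentially step for step. It uses the same split of the outage event on $\{B_k(\xi)\le\widehat{B}_k(\xi)\}$, Lemma~\ref{lem:high_probability_backlog_bound} for the $\delta_k$ term, and monotonicity in $B_k(\xi)$ of the bound in Theorem~\ref{thm:delay_outage_probability_bound}, combined with \eqref{eq:deterministic_delay_condition}, for the remaining $\varepsilon_k-\delta_k$; your conditioning on $\mathcal{F}_{\tilde{t}_{\xi,0}}$ makes rigorous the paper's informal replacement of the random $B_k(\xi)$ by $\widehat{B}_k(\xi)$ and even gives the sharper $(\varepsilon_k-\delta_k)\Pr\{B_k(\xi)\le\widehat{B}_k(\xi)\}$, though, as in the paper, it presumes Theorem~\ref{thm:delay_outage_probability_bound} holds conditionally on the start-of-DT information, which is more than ``verbatim'' because the stationarity shift in step (e) of that theorem's derivation does not literally survive conditioning on $Z_k(\tilde{t}_{\xi,0})$.
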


\begin{proof}
Note that the delay-outage probability can be decomposed as
\begin{equation}
\begin{split}
&\Pr
\left\{
D_k(\tilde{t}_{\xi,r})>D_k^{\max}
\right\}
\\
&\!\leq \!
\Pr\!
\left\{\!
B_k(\xi)\!\!>\!\!\widehat{B}_k(\xi)\!
\right\}
\!+\!
\Pr\!
\left\{\!
D_k(\tilde{t}_{\xi,r})\!\!>\!D_k^{\max},
B_k(\xi)\!\! \leq \! \widehat{B}_k(\xi)\!
\right\}\!.
\end{split}
\label{eq:delay_probability_decomposition}
\end{equation}
Furthermore, the delay-outage upper bound in
\eqref{eq:delay_violation_bound}
is monotonically increasing with
respect to $B_k(\xi)$. Hence, under the condition that $B_k(\xi) \leq \widehat{B}_k(\xi)$, we have
\begin{equation}
\begin{aligned}
&
\Pr
\left\{
D_k(\tilde{t}_{\xi,r})\!>\!D_k^{\max},
B_k(\xi)\!\leq\!\widehat{B}_{k}(\xi)
\right\}
\\
&\leq
\frac{\Phi_{k,\max}}{\Phi_{k,\min}}
e^{-\zeta_k^\star
\left[
D_k^{\mathrm{rem}}(\xi)\hat{C}_k(\xi)
-\widehat{B}_{k}(\xi)
\right]},
\end{aligned}
\label{eq:conditional_delay_outage_bound}
\end{equation} where the inequality is obtained by replacing $B_k(\xi)$ in \eqref{eq:delay_violation_bound} with $\widehat{B} _k(\xi)$. Substituting
\eqref{eq:conditional_delay_outage_bound}
and \eqref{eq:backlog_exceedance_probability}
into
\eqref{eq:delay_probability_decomposition},
we obtain
\vspace{-0.05cm}
\begin{equation}
\begin{split}
&
\Pr\!
\left\{
D_k(\tilde{t}_{\xi,r})\!>\!D_k^{\max}
\right\}\!\leq\!
\delta_k
\!+\!
\frac{\Phi_{k,\max}}{\Phi_{k,\min}}
e^{-\zeta_k^\star
\left[
D_k^{\mathrm{rem}}(\xi)\hat{C}_k(\xi)
-\widehat{B}_{k}(\xi)
\right]}
\\
&\leq
\delta_k+(\varepsilon_k-\delta_k)
=
\varepsilon_k ,
\end{split}
\raisetag{12pt}
\vspace{-0.25cm}
\end{equation}
where the last inequality follows from
\eqref{eq:deterministic_delay_condition}.
This completes the proof.
\end{proof}
Based on Proposition~\ref{prop:deterministic_delay_constraint}, we derive a sufficient condition, i.e., \eqref{eq:deterministic_delay_condition}{,} for the delay-outage constraints in \eqref{eq:P1_delay}.

\vspace{0.15cm}
\subsection{Delay-Guaranteed Service Rate (DGSR)}
To facilitate the system design, we further
derive an equivalent minimum service rate for each user $k$ in block $\xi$ based on the deterministic sufficient condition in \eqref{eq:deterministic_delay_condition} which is referred
to as the DGSR and denoted as $\widehat{C}_k^{\mathrm{DGSR}}(\xi)$.

Taking logarithms on both sides of \eqref{eq:deterministic_delay_condition}, we obtain
\begin{equation}
\hat{C}_k(\xi)
\!\geq\!
\frac{1}{D_k^{\mathrm{rem}}(\xi)}\!
\left[
\frac{1}{\zeta_k^\star}
\ln\!
\left(
\frac{\Phi_{k,\max}}
{\Phi_{k,\min}(\varepsilon_k\!-\!\delta_k)}
\right)
\!\!+\!
{\widehat{B}_{k}(\xi)}
\right].
\label{eq:dgsr_rate_condition}
\end{equation}
Accordingly, we can set
\begin{equation}
\hat{C}_k^\mathrm{DGSR}(\xi)\!\!=\!\!
\frac{1}{D_k^{\mathrm{rem}}(\xi)}\!
\left[\!
\frac{1}{\zeta_k^\star}
\ln\!\!
\left(\!
\frac{\Phi_{k,\max}}
{\Phi_{k,\min}(\varepsilon_k\!-\!\delta_k)}
\!\right)
\!\!+\!\!
{\widehat{B}_{k}(\xi)}
\!\right]\!\!.
\label{eq:dgsr_closed_form}
\end{equation}

The deterministic sufficient condition for the delay-outage constraint in \eqref{eq:deterministic_delay_condition} can be equivalently recast as the following service rate constraints, i.e.,
\begin{equation}
\hat{C}_k(\xi)
\geq
\hat{C}_{k}^{\mathrm{DGSR}}(\xi),
\quad k\in\mathcal{K}.
\label{eq:dgsr_service_constraint}
\end{equation}

To obtain a tractable formulation for ~\probref{prob:P1}, we replace the original probabilistic delay-outage constraint \eqref{eq:P1_delay} with
\eqref{eq:dgsr_service_constraint}. This leads to the following
optimization problem:
\begin{subequations}
\begin{align}
\optprob{prob:P2}:\!\!
\min_{\{\boldsymbol{\Lambda}(\xi),\boldsymbol{\Omega}(\xi),\boldsymbol{\omega}(\xi)\}}
&
\lim_{|\boldsymbol{\Xi}|\rightarrow\infty}\frac{1}{|\boldsymbol{\Xi}|}
\sum_{\xi\in\boldsymbol{\Xi}}
\mathrm{Tr}\!\left(\boldsymbol{\omega}(\xi)\boldsymbol{\omega}^{H}(\xi)\right) \label{eq:P2_obj}\\
\mathrm{s.t.}\quad
&\hat{C}_k(\xi)
\geq
\hat{C}_{k}^{\mathrm{DGSR}}(\xi), \forall k,\xi, \label{eq:P2_service}\\
&\eqref{eq:P1_Tx_region}-\eqref{eq:P1_queue}. \label{eq:P2_inherited}
\end{align}
\end{subequations}Although \probref{prob:P2} is a deterministic problem, it remains challenging to solve optimally due to inter-block coupling of MA positions and queue states,
as well as intra-block nonconvex coupling among MA positions, beamforming vectors, and  service rates.

\vspace{0.25cm}
\section{Proposed Solution to \probref{prob:P2}}
\label{sec:proposed_solution}
To tackle \probref{prob:P2}, we propose {an AO algorithm with a two-layer framework}. First, the long-horizon problem with $|\boldsymbol{\Xi}|\rightarrow\infty$ is approximated by a sequence of short-horizon problems. Each short-horizon problem is then reformulated as an outer movement-budget optimization and an inner joint optimization of the MA destination positions and transmit beamforming. After solving the current short-horizon problem, the optimized system states are only implemented in the first block, and the horizon slides forward.

\subsection{Short-Horizon Approximation}
\label{sec:short_horizon}
At the beginning of block $\xi$, the current system state is denoted as
\begin{equation}
\vspace{-0.05cm}
\mathcal S(\xi)=\{\boldsymbol\Lambda(t_{\xi,0}),\boldsymbol\Omega(t_{\xi,0}),\mathbf Q(t_{\xi,0})\},
\label{eq:DORA_state}
\end{equation}
where $\mathbf Q(t_{\xi,0})=[Q_1(t_{\xi,0}),\ldots,Q_K(t_{\xi,0})]^T$.
Let $H$ denote the 
\par\noindent length of the short horizon for approximation. Accordingly, the optimization window is given by the set $\mathcal H_\xi=\{\xi,\ldots,\min\{\xi+H-1,|\boldsymbol\Xi|-1\}\}$.
At each block $\xi$, the original long-horizon problem \probref{prob:P2} is approximated by the following finite-window optimization problem over $\mathcal H_\xi$:
\begin{subequations}\label{eq:short_horizon_problem}
\begin{align}
&\optprob{prob:P3}(\mathcal H_\xi):\!
\min_{\{\boldsymbol\Lambda(\nu),\boldsymbol\Omega(\nu),\boldsymbol\omega(\nu)\}_{\nu\in\mathcal H_\xi}}\!
\sum_{\nu\in\mathcal H_\xi}\!\mathrm{Tr}\big(\boldsymbol\omega(\nu)\boldsymbol\omega^H(\nu)\big)
\label{eq:P3_obj}\\
&\hspace{3.0cm}\mathrm{s.t.}\quad
\eqref{eq:P2_service},\ \eqref{eq:P2_inherited},\ \forall\nu\in\mathcal H_\xi.
\label{eq:P3_constraints}
\end{align}
\end{subequations} The constraints inherited from \probref{prob:P2} are imposed on each block $\nu\in\mathcal H_\xi$ by replacing $\xi$ with $\nu$, with the window initialized by $\mathcal S(\xi)$. Moreover, the factor $1/|\mathcal H_\xi|$ is omitted in the objective function. 

It is worth noting that, due to the position-continuity and queue-evolution constraints, the decisions across blocks remain temporally coupled. Therefore, \probref{prob:P3}$(\mathcal H_\xi)$ constitutes a joint finite-window optimization problem rather than a collection of independent per-block problems. Next, we develop an efficient solution to \probref{prob:P3}$(\mathcal H_\xi)$.
\vspace{0.15cm}
\subsection{Two-Layer Problem Reformulation}
\label{sec:two_layer}
For \probref{prob:P3}$(\mathcal H_\xi)$, the movement duration depends on the destination MA positions through the maximum travel distance and
slot rounding in \eqref{slot rounding}. To decouple the MA movement duration from the continuous MA position optimization, we introduce an integer movement-slot budget
$\ell(\nu)\in\{0,\ldots,N_0-1\}$ for each block
$\nu\in\mathcal{H}_{\xi}$, which specifies the maximum number of slots allocated for MA movement. Thus, the actual movement duration satisfies
$N_{\mathrm{mov}}(\nu)\leq \ell(\nu)$, and the remaining delay budget is
$D_{k,\ell}^{\mathrm{rem}}(\nu)\triangleq D_k^{\max}-\ell(\nu)$. Let $\boldsymbol{\ell}_{\xi} = \{\ell(\nu)\}_{\nu \in \mathcal{H}_{\xi}}$ denote the sequence of the movement delay budget across $\mathcal{H}_{\xi}$, which is
feasible if $D_{k,\ell}^{\mathrm{rem}}(\nu) >
0, \forall k \in \mathcal{K},\nu \in \mathcal{H}_{\xi}$. Based on this, we define
\begin{equation}
\widehat{A}_{k,\ell}(\nu)\!\!
\triangleq \!\!
\inf_{\theta>0}\!
\frac{\!
\ln\!\Big(\!
\mathbf{e}_{Z_k(t_{\nu,0})}^{T}\!\!
\left[
\mathbf{P}_k
\mathrm{diag}(1,e^{\theta R_k})
\right]^{\!\ell(\nu)}\!
\mathbf{1}\!
\Big)\!
\!+\! \ln\!\left(\!\frac{1}{\delta_k}\!\right)
}{\theta}\!\!
\label{eq:DORA_budgeted_arrival}
\end{equation} and 
$\widehat B_{k,\ell}(\nu)
\triangleq
Q_k(t_{\nu,0})
+\widehat A_{k,\ell}(\nu),$
which yield the following budget-dependent DGSR
\begin{equation}
\hat{C}_{k,\ell}^{\mathrm{DGSR}}(\nu)
\!\!=\!\!
\frac{1}{D_{k,\ell}^{\mathrm{rem}}(\nu)}\!\!
\left[\!
\frac{1}{\zeta_k^\star}\!
\ln\!\!\left(\!\!
\frac{\Phi_{k,\max}}
{\Phi_{k,\min}(\varepsilon_k\!-\!\delta_k)}\!\!
\right)
\!\!+\!\!
\widehat{B}_{k,\ell}(\nu)
\! \right]\!\!.
\label{eq:DORA_DGSR_budget}
\end{equation}

For a given sequence $\boldsymbol\ell_\xi$, let $J_\xi^\star(\boldsymbol\ell_\xi)$ denote the optimal value of \probref{prob:P3}$(\mathcal H_\xi)$. As such \probref{prob:P3}$(\mathcal H_\xi)$ can be recast as the following two-layer problem, i.e.,
\begin{subequations}\label{eq:P3_outer}\vspace{-0.05cm}
\begin{align}
&\!\!\!\!\!\!\!\!\!\!\textnormal{(\ref{prob:P3}$(\mathcal H_\xi)$-Outer):}\,\,\min_{\boldsymbol{\ell}_{\xi}}
J_{\xi}^{\star}(\boldsymbol{\ell}_{\xi})
\label{eq:P3_outer_obj}\\
&\quad\quad\quad\,\,\,\,\mathrm{s.t.}\,\,\,
\ell(\nu)\in\{0,\ldots,N_0-1\},
\forall \nu\in\mathcal H_\xi,
\label{eq:P3_outer_budget}\\
&\quad\quad\quad\,\,\,\,\hphantom{\mathrm{s.t.}\,\,\,}
D_{k,\ell}^{\mathrm{rem}}(\nu)>0,
\forall k\in\mathcal K,\ \nu\in\mathcal H_\xi,
\label{eq:P3_outer_delay}
\end{align}
\end{subequations}
and
\vspace{-0.05cm}
\begin{subequations}\label{eq:P3_inner}
\begin{align}
&\!\!\!\!\textnormal{(\ref{prob:P3}$(\mathcal H_\xi)$-Inner):}\,\,J_{\xi}^{\star}(\boldsymbol{\ell}_{\xi})
\!=\!\!\!\!\!\!
\min_{\substack{\!
\{\boldsymbol{\Lambda}(\nu),\boldsymbol{\Omega}(\nu),\\
\boldsymbol{\omega}(\nu)\}_{\nu\in\mathcal H_{\xi}}
\!}}\!\!\!\!
\sum_{\,\,\,\,\,\nu\in\mathcal H_{\xi}}\!\!\!\!\!
\mathrm{Tr}\!\left(
\boldsymbol{\omega}(\nu)\boldsymbol{\omega}^{H}(\nu)
\right)\!,
\label{eq:P3_inner_obj}\\
&\quad\quad\quad\quad\,\,\mathrm{s.t.}\,\,\,
\hat C_k(\nu)\! \geq \!
\hat C_{k,\ell}^{\mathrm{DGSR}}(\nu),\forall k\in\mathcal K,\!\ \nu\in\mathcal H_\xi,
\label{eq:P3_inner_service}\\
&\hphantom{\mathrm{s.t.}\quad\quad\quad\quad\,\,\,\,\,}
N_{\mathrm{mov}}(\nu)\leq\ell(\nu),
\forall \nu\in\mathcal H_\xi,
\label{eq:P3_inner_movement}\\
&\hphantom{\mathrm{s.t.}\quad\quad\quad\quad\,\,\,\,\,}
\eqref{eq:P1_Tx_region}\text{--}\eqref{eq:P1_queue},
\forall\nu\in\mathcal H_\xi.
\label{eq:P3_inner_inherited}
\end{align}
\end{subequations}

\vspace{0.05cm}
\subsection{Proposed Solution to \textup{(}\uprightref{prob:P3}$(\mathcal H_\xi)$-\textup{Inner)}}
\label{sec:inner_solution}
\vspace{-0.05cm}
We employ AO to handle the coupling between the destination MA positions and transmit beamforming in (\ref{prob:P3}$(\mathcal H_\xi)$-Inner) by decomposing it into the following subproblems. 

\subsubsection{Transmit Beamforming Optimization with Fixed Destination MA Positions}
\label{sec:beamforming_update} For fixed destination MA positions, it can be shown that the constraints in \eqref{eq:P3_inner_service} can be recast as
\begin{equation}
\gamma_k(\nu)\geq\Gamma_{k,\ell}(\nu),\quad\forall k,\nu\in\mathcal{H}_\xi.
\label{SINR_threshold}
\end{equation} where $\Gamma_{k,\ell}(\nu)$ denotes the SINR threshold induced by the DGSR. By applying the standard phase-rotation technique, i.e., 
\begin{equation}
\Im\{\mathbf{h}_k^H(\nu)\boldsymbol{\omega}_k(\nu)\}=0,
\quad
\Re\{\mathbf{h}_k^H(\nu)\boldsymbol{\omega}_k(\nu)\}\geq0,
\label{eq:DORA_phase_rotation}
\end{equation}
to \eqref{SINR_threshold}, it can be further simplified as a second-order cone constraint:
\begin{equation}
\left\|\!
\left[
\{
\mathbf h_k^H\!(\nu)\boldsymbol{\omega}_j(\nu)
\}_{j\neq k},
\sigma_k
\right]^{\!T}\!
\right\|_2
\!\leq\!
\frac{
\Re
\{
\mathbf h_k^H\!(\nu)
\boldsymbol{\omega}_k(\nu)
\}
}
{\sqrt{\Gamma_{k,\ell}(\nu)}} ,\forall k,\nu.
\label{eq:second-order cone constraint}
\end{equation}
Hence, the beamforming subproblem is formulated as the following second-order cone programming (SOCP) problem:
\begin{subequations}
\begin{align}
\optprob{prob:P4}:\!\!
\min_{\{\boldsymbol{\omega}(\nu)\}}
&
\sum_{\nu\in\mathcal{H}_{\xi}}
\mathrm{Tr}\!\left(
\boldsymbol{\omega}(\nu)\boldsymbol{\omega}^{H}(\nu)
\right)
\label{eq:P4_obj}\\
\mathrm{s.t.}\quad
& ~\eqref{eq:DORA_phase_rotation},~\eqref{eq:second-order cone constraint}.
\end{align}
\end{subequations}
Since \probref{prob:P4} is convex, it can be optimally solved by invoking the interior-point algorithm.

\subsubsection{Destination MA Position Optimization for any Given Transmit Beamforming}
\label{sec:position_update} For fixed beamforming vectors, the second inner-layer subproblem optimizes the MA positions. 

The MA positions affect both the desired signal gains and interference
terms, resulting in a nonconvex coupling.
Therefore, the successive convex approximation (SCA) method is employed
to obtain a tractable solution.
 Let $\mathbf{x}(\nu)
=
[
\boldsymbol{\Lambda}^{T}(\nu),
\boldsymbol{\Omega}^{T}(\nu)
]^T$ denote the collection of all MA positions. At the $\kappa$-th SCA iteration, a concave lower bound of the rate is constructed as
\begin{equation}
\begin{aligned}
\widetilde{C}_{k}^{(\kappa)}(\mathbf{x}(\nu))
&=
C_k(\mathbf{x}^{(\kappa)}(\nu))
+
\nabla_{\mathbf{x}}C_k(\mathbf{x}^{(\kappa)}(\nu))^{T}
(\mathbf{x}(\nu)\\
&\quad
-\mathbf{x}^{(\kappa)}(\nu))-
\frac{\rho_k}{2}
\|\mathbf{x}(\nu)-\mathbf{x}^{(\kappa)}(\nu)\|_2^2,
\end{aligned}
\label{eq:DORA_rate_lower_bound}
\end{equation}
where $\rho_k\!\!>\!\!0$ is no smaller than the local Lipschitz constant of $\nabla_{\!\mathbf{x}}C_k(\mathbf{x})$. Accordingly, the nonconvex rate constraint is conservatively replaced by
\begin{equation}
{\tau \widetilde{C}_{k}^{(\kappa)}\!(\mathbf{x}(\nu))}/
{L_{\mathrm{pkt}}}
\!\geq\!
\hat{C}_{k,\ell}^{\mathrm{DGSR}}(\nu),\forall k,\nu.
\label{eq:DORA_SCA_rate_constraint}
\end{equation}
The given movement budgets impose
\begin{subequations}
\begin{align}
\left\|
\boldsymbol{\Lambda}_m\!(t_{\nu,{n_{\mathrm{s},m}}})
\!-\!
\boldsymbol{\Lambda}_m\!(t_{\nu,0})
\right\|_2
&\leq
v^{\mathrm{s}}\tau \ell(\nu),
\quad \forall m,\nu,
\label{eq:DORA_Tx_movement}\\
\left\|
\boldsymbol{\Omega}_k(t_{\nu,{n_{\mathrm{r},k}}})
-
\boldsymbol{\Omega}_k(t_{\nu,0})
\right\|_2
&\leq 
v^{\mathrm{r}}\tau \ell(\nu),
\quad \forall k,\nu.
\label{eq:DORA_Rx_movement}
\end{align}
\end{subequations}
For the spacing constraint, define
$\mathbf{d}_{ij}^{(\kappa)}\!(\nu)\!=\!
\boldsymbol{\Lambda}_i^{\!(\kappa)}(t_{\nu,{n_{\mathrm{s},i}}})
-\!
\boldsymbol{\Lambda}_j^{\!(\kappa)}(t_{\nu,{n_{\mathrm{s},j}}}).$
Its first-order lower bound is
\begin{equation}
2(\mathbf{d}_{ij}^{(\kappa)}(\nu))^{\!T}\!\!
\left[
\boldsymbol{\Lambda}_i(t_{\nu,{n_{\mathrm{s},i}}})
\!-\!\!
\boldsymbol{\Lambda}_j(t_{\nu,{n_{\mathrm{s},j}}})
\right]
\!-\!
\|\mathbf{d}_{ij}^{(\kappa)}(\nu)\|_2^2
\!\geq\!\!
D^2\!.
\label{eq:DORA_spacing_linearization}
\end{equation}

Note that, with fixed transmit beamforming, the transmit p-

\par\noindent ower objective is independent of the MA-position variables and thus cannot guide the MA-position update. Therefore, a proximal regularization objective is adopted to stabilize the SCA iterations and suppress unnecessary position oscillations within the feasible region. Combining the above approximations yields the following convex MA-position update:
\begin{subequations}
\begin{align}
\optprob{prob:P5}:\!\!
\min_{\{\boldsymbol{\Lambda}(\nu),\boldsymbol{\Omega}(\nu)\}}
&
\eta_{\Lambda}\!\!\!\!
\sum_{\nu\in\mathcal{H}_{\xi}}\!\sum_{m\in\mathcal{M}}\!
\left\|
\boldsymbol{\Lambda}_m(t_{\nu,{n_{\mathrm{s},m}}})
\!-\!
\boldsymbol{\Lambda}_m^{(\kappa)}(t_{\nu,{n_{\mathrm{s},m}}})
\right\|_2^2
\nonumber\\
&+\!
\eta_{\Omega}\!\!\!
\sum_{\nu\in\mathcal{H}_{\xi}}\!\sum_{k\in\mathcal{K}}\!
\left\|
\boldsymbol{\Omega}_k(t_{\nu,{n_{\mathrm{r},k}}})
\!-\!
\boldsymbol{\Omega}_k^{(\kappa)}(t_{\nu,{n_{\mathrm{r},k}}})
\right\|_2^2
\label{eq:P5_obj}\\
\mathrm{s.t.}\quad
&
\eqref{eq:DORA_Tx_movement},~\eqref{eq:DORA_Rx_movement},~\eqref{eq:DORA_spacing_linearization}
\label{eq:P5_movement},\\
&
\boldsymbol{\Lambda}_m(t_{\nu,{n_{\mathrm{s},m}}})
\in
\mathcal{C}^{\mathrm{s}},
\quad \forall m,\nu,
\label{eq:P5_Tx_region}\\
&
\boldsymbol{\Omega}_k(t_{\nu,{n_{\mathrm{r},k}}})
\in
\mathcal{C}^{\mathrm{r}}_k,
\quad \forall k,\nu,
\label{eq:P5_Rx_region}\\
&
{\tau \widetilde{C}_{k}\!^{\!(\kappa)}\!(\mathbf{x}(\nu))}\!/
\!{L_{\mathrm{pkt}}}
\!\!\geq\!\!
\hat{C}_{k,\ell}^{\mathrm{DGSR}}(\nu), \!\forall k,\nu,
\label{eq:P5_rate}\\
&
\boldsymbol{\Lambda}_m(t_{\nu+1,0})
=
\boldsymbol{\Lambda}_m(t_{\nu,{n_{\mathrm{s},m}}}), \forall m,\nu,
\label{eq:P5_Tx_continuity}\\
&
\boldsymbol{\Omega}_k(t_{\nu+1,0})
=
\boldsymbol{\Omega}_k(t_{\nu,{n_{\mathrm{r},k}}}), \forall k,\nu.
\label{eq:P5_Rx_continuity}
\end{align}
\end{subequations}
where $\eta_{\Lambda}\!\!>\!\!0$ and $\eta_{\Omega}\!\!>\!\!0$ are proximal regularization parameters.

\begin{algorithm}[t]
\caption{Proposed Algorithm to Solve \probref{prob:P2}}
\label{alg:DORA}
\begin{algorithmic}[1]
\STATE \textbf{Input:} Initial MA positions $\boldsymbol{\Lambda}(t_{0,0})$, $\boldsymbol{\Omega}(t_{0,0})$, queue state $\mathbf{Q}(t_{0,0})$, rolling window length $H$, candidate movement slot budgets, convergence tolerance $\epsilon_{\mathrm{conv}}$, and maximum iteration number $I_{\max}$.
\STATE Initialize $\xi=0$.
\WHILE{$\xi\leq|\boldsymbol{\Xi}|-1$}
    \STATE Observe $\mathcal{S}(\xi)$, construct $\mathcal{H}_{\xi}$, and predict the arrivals and queue states.
    \STATE Construct candidate budget sequences $\{\ell(\nu)\}_{\nu\in\mathcal{H}_{\xi}}$.
    \FOR{each candidate $\{\ell(\nu)\}_{\nu\in\mathcal{H}_{\xi}}$}
        \STATE Compute $\widehat{A}_{k,\ell}(\nu)$, $\widehat{B}_{k,\ell}(\nu)$, and $\hat{C}_{k,\ell}^{\mathrm{DGSR}}(\nu)$ for all $k$ and $\nu$.
        \STATE Initialize $\boldsymbol{\Lambda}^{(0)}(\nu)$ and $\boldsymbol{\Omega}^{(0)}(\nu)$, and set $\kappa=0$.
        \REPEAT
            \STATE Fix $\boldsymbol{\Lambda}^{(\kappa)}(\nu)$ and $\boldsymbol{\Omega}^{(\kappa)}(\nu)$; solve the SOCP subproblem \probref{prob:P4}.
            \STATE Fix $\boldsymbol{\omega}^{(\kappa+1)}(\nu)$; solve the SCA-based MA-position subproblem \probref{prob:P5}.
            \STATE Set $\kappa\leftarrow\kappa+1$.
        \UNTIL{the objective variation is below $\epsilon_{\mathrm{conv}}$ or $\kappa\!\!\geq\!\! I_{\max}$}
        \STATE Record the feasible window objective value.
    \ENDFOR
    \STATE Select the feasible candidate with the minimum window objective and execute its first-block decision.
    \STATE Update $\boldsymbol{\Lambda}(t_{\xi+1,0})$, $\boldsymbol{\Omega}(t_{\xi+1,0})$, and $\mathbf{Q}(t_{\xi+1,0})$ according to \eqref{eq:P1_Tx_continuity}--\eqref{eq:P1_queue}.
    \STATE Set $\xi\leftarrow\xi+1$.
\ENDWHILE
\STATE \textbf{Output:} Online MA target positions and beamforming matrices.
\end{algorithmic}
\end{algorithm}

\begin{figure*}[!t]
    \vspace{-0.75cm}
    \centering
    \includegraphics[width=\textwidth]{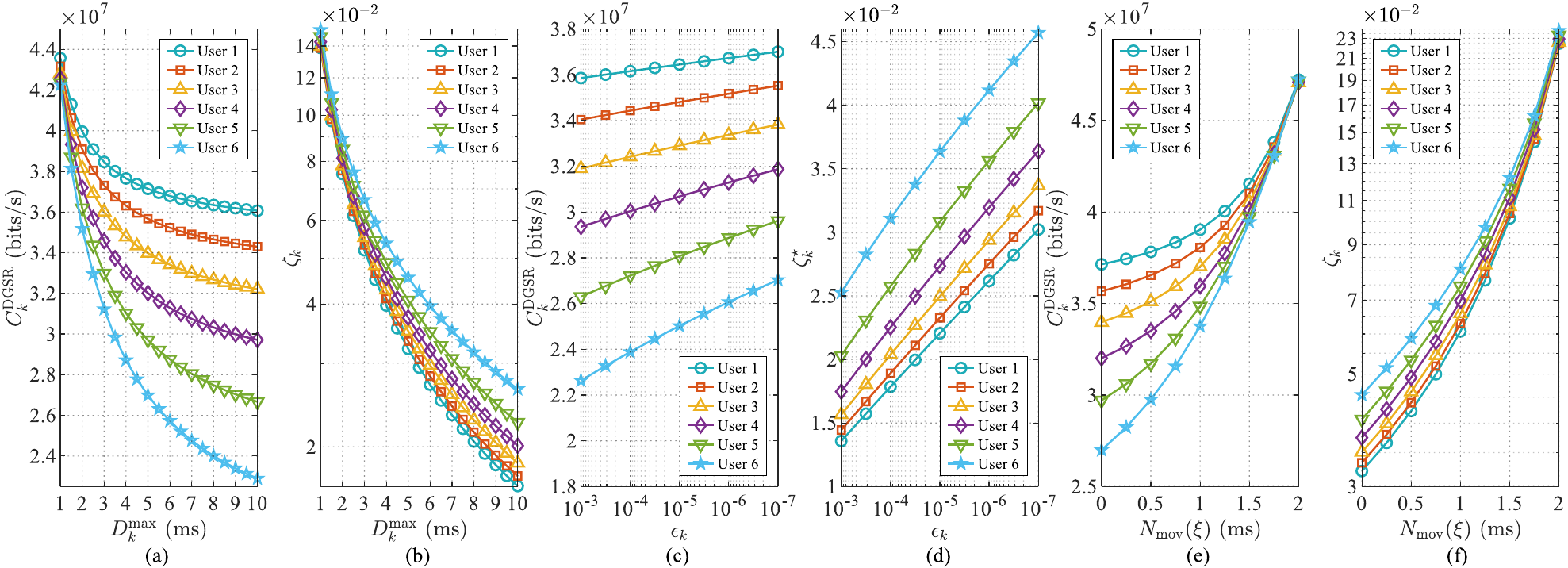}
    \vspace{-0.75cm}
    \caption{Characterization of the proposed DGSR and martingale exponent under different statistical delay parameters: (a)--(b) versus $D_k^{\max}$, (c)--(d) versus $\varepsilon_k$, and (e)--(f) versus {$N_{\rm mov}(\xi)$}.}
    \vspace{-0.15cm}
    \label{fig:DGSR_characterization}
\end{figure*}

For each movement-budget sequence, \probref{prob:P4} and \probref{prob:P5} are alternately solved. Since this yields a non-increasing objective value of (\ref{prob:P3}$(\mathcal H_\xi)$-Inner), the inner iterations converge to a stationary feasible point under standard SCA conditions. The outer layer then selects the feasible candidate with the minimum objective value. The overall complexity per rolling window is given by
\begin{equation}
    \mathcal{O}\!\left(\! \widetilde N_{\ell} I_{\max}\! \left[\! \sqrt{\!N_{\mathrm{bf}}}(HMK)^3\!\ln\!\frac{1}{\epsilon_{\mathrm{bf}}} \!\!+\!\! \sqrt{\!N_{\mathrm{ma}}}N_x^3\!\ln\!\frac{1}{\epsilon_{\mathrm{ma}}} \right] \!\right)\!, 
\end{equation} where \(\widetilde N_{\ell}\) denotes the number of feasible movement-slot budget sequences \(\boldsymbol{\ell}_{\xi}\) considered in the outer-layer search, \(I_{\max}\) is the maximum number of AO iterations, and \(N_x=2H(M+K)\). Moreover, \(N_{\rm bf}\) and \(N_{\rm ma}\) denote the numbers of constraints in (P4) and (P5), respectively, while \(\epsilon_{\rm bf}\) and \(\epsilon_{\rm ma}\) are their corresponding solution accuracies. Hence, the computational complexity grows polynomially with the prediction-window length, the number of antennas, and the number of users. After solving \probref{prob:P3}$(\mathcal{H}_{\xi})$, the optimized solutions are implemented in the current block $\xi$, and the short horizon slides forward by shifting the blocks by one. The complete procedure is summarized in Algorithm~\ref{alg:DORA}.
\vspace{-0.05cm}
\section{Numerical Results}
\label{sec:numerical_results}

\begin{table}[!t]
\centering
\vspace{-0.45cm}
\caption{Main Simulation Parameters}
\vspace{-0.15cm}
\label{tab:simulation_parameters}
\renewcommand{\arraystretch}{1.15}
\begin{tabular}{l c}
\hline
\textbf{Parameter} & \textbf{Value} \\
\hline

Carrier wavelength, $\lambda$ & $0.01$ m \\
System bandwidth, $B$ & $100$ MHz \\
Noise power, $\sigma^2$ & $-82$ dBm \\
Minimum Tx-MA spacing, $D$ & $0.5\lambda$ \\
BS-side movable region & $[-2\lambda,2\lambda]\times[-2\lambda,2\lambda]$ \\
User-specific Rx-MA movable region & $[-\lambda,\lambda]\times[-\lambda,\lambda]$ \\
Tx-MA moving speed, $v^{\rm s}$ & $0.5$ m/s \\
Rx-MA moving speed, $v^{\rm r}$ & $0.5$ m/s \\
Packet length, $L_{\rm pkt}$ & $512$ bits \\
Number of transmission blocks, $|\boldsymbol{\Xi}|$ & $2\times 10^4$ \\
ON-state arrival rate, $R_k$ & $14$ packets/slot \\
Number of channel uses, $m_{\rm s}$ & $300$ \\
Prescribed BLER, $\hbar$ & $10^{-5}$ \\
Delay violation probability, $\varepsilon_k$ & $10^{-5}$ \\
Delay threshold, $D_k^{\max}$ & $1$--$10$ ms \\
Probability budget, $\delta_k$ & $10^{-6}$ \\
Number of Monte Carlo realizations, $N_{\rm MC}$ & 50 \\
\hline
\end{tabular}
\end{table}

\begin{figure}[!t]
    \vspace{-0.68cm}
    \centering
    \includegraphics[width=7.8cm]{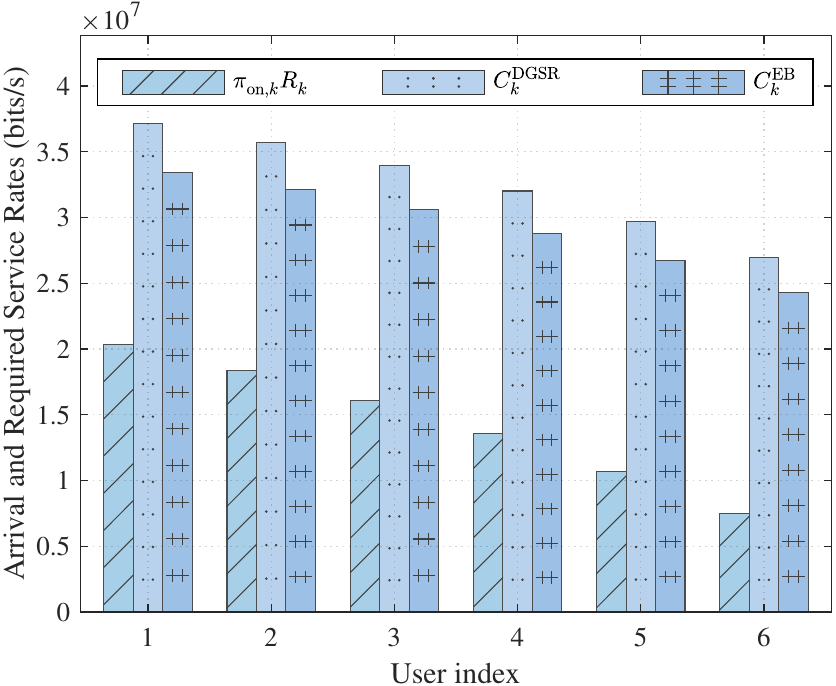}
    \vspace{-0.35cm}
    \caption{Comparison among the average arrival rate, the proposed DGSR, and the EB-based service-rate requirement.}
    \label{fig:DGSR_EB_bar}
\end{figure}

\subsection{Simulation Setup and Benchmark}
\vspace{-0.05cm}
In this section, we present numerical results to evaluate the performance of the proposed  {AO algorithm}. In the simulations, the BS is equipped with $M\!=\!9$ Tx-MAs and serves $K\!=\!6$ users, each equipped with one Rx-MA. The transmission process is organized into consecutive blocks, each containing $N_{0}\!=\!500$ slots with slot duration $\tau\!=\!0.125$ ms. User packet arrivals follow the MMOO model described in Section~II-C. The distance between user $k$ and the BS, denoted by $d_k^{\mathrm{BU}}$, is randomly generated within $[100,200]$ m. For all users, the Tx-side and Rx-side path numbers are identical, i.e., $L_k^{\rm s}\!=\!L_k^{\rm r}\!=\!L\!=\!6$, $\forall k\!\in\!\mathcal K$. The elevation and azimuth AoDs/AoAs are independently generated from the uniform distribution over $[0,\pi]$. The path-response matrix $\boldsymbol{\Sigma}_k$ is diagonal, with diagonal elements following $\mathcal{CN}(0,\Gamma_k^2/L)$, where $\Gamma_k^2=g_0 d_k^{-\delta_{\mathrm{loss}}}$, $g_0=-42$ dB, and $\delta_{\mathrm{loss}}=2.8$. Other key simulation parameters are specified in the  Table~\ref{tab:simulation_parameters}. 

To manifest the superiority of the proposed DGSR, we also show the performance of the existing EB-based scheme as a benchmark, which is also constructed based on the same MMOO arrival model. According to EB theory \cite{elwalid1993effective}, the EB-based service rate is given by
\begin{equation}
\vspace{-0.1cm}
\hat C_k^{\mathrm{EB}}
\triangleq
\frac{1}{\varsigma_k}
\ln sp\!\left(
\mathbf{P}_k
\mathrm{diag}
\left(
1,e^{\varsigma_k R_k}
\right)
\right), 
\label{eq:EB_service_rate}
\end{equation}
where $\varsigma_k>0$ denotes the statistical QoS exponent. Under the conventional large-deviations approximation, the QoS exponent is determined as
\begin{equation}
\vspace{-0.1cm}
\varsigma_k
\triangleq
\inf\!\left\{
\varsigma>0\!:\!
\left[
sp\!\left(
\mathbf{P}_k
\mathrm{diag}
\left(
1,e^{\varsigma R_k}
\right)
\right)
\right]^{-D_k^{\max}}
\! \leq \!
\varepsilon_k
\right\}.
\label{eq:EB_exponent}
\end{equation}
Unlike the proposed DGSR, the EB-based requirement does not account for the backlog accumulation and service interruption induced by MA movement.

\begin{figure*}[!t]
    \vspace{-0.65cm}
    \centering
    \includegraphics[width=\textwidth]{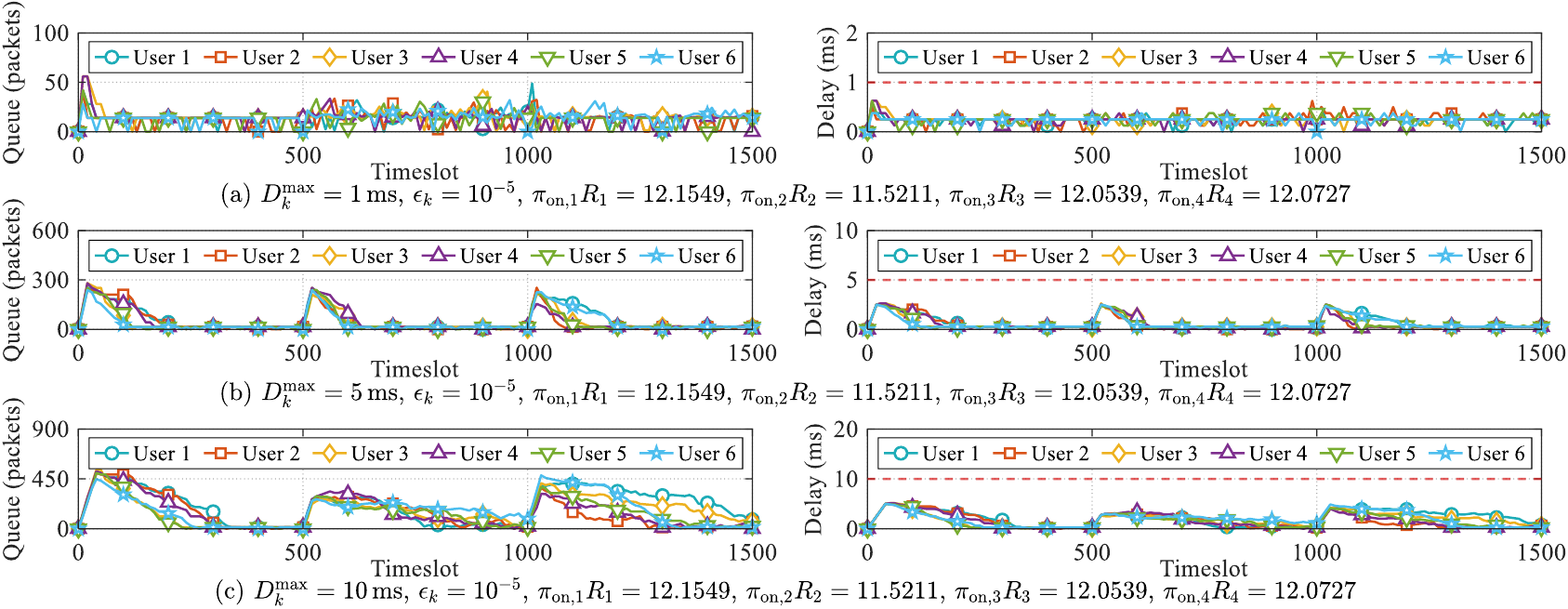}
    \vspace{-0.75cm}
    \caption{Queue backlog and actual delay evolution by the proposed {AO algorithm}.}
    \vspace{-0.3cm}
    \label{fig:DORA_queue_delay}
\end{figure*}

\begin{figure*}[!t]
    \centering
    \IfFileExists{figure7.pdf}{%
        \includegraphics[width=17.75cm]{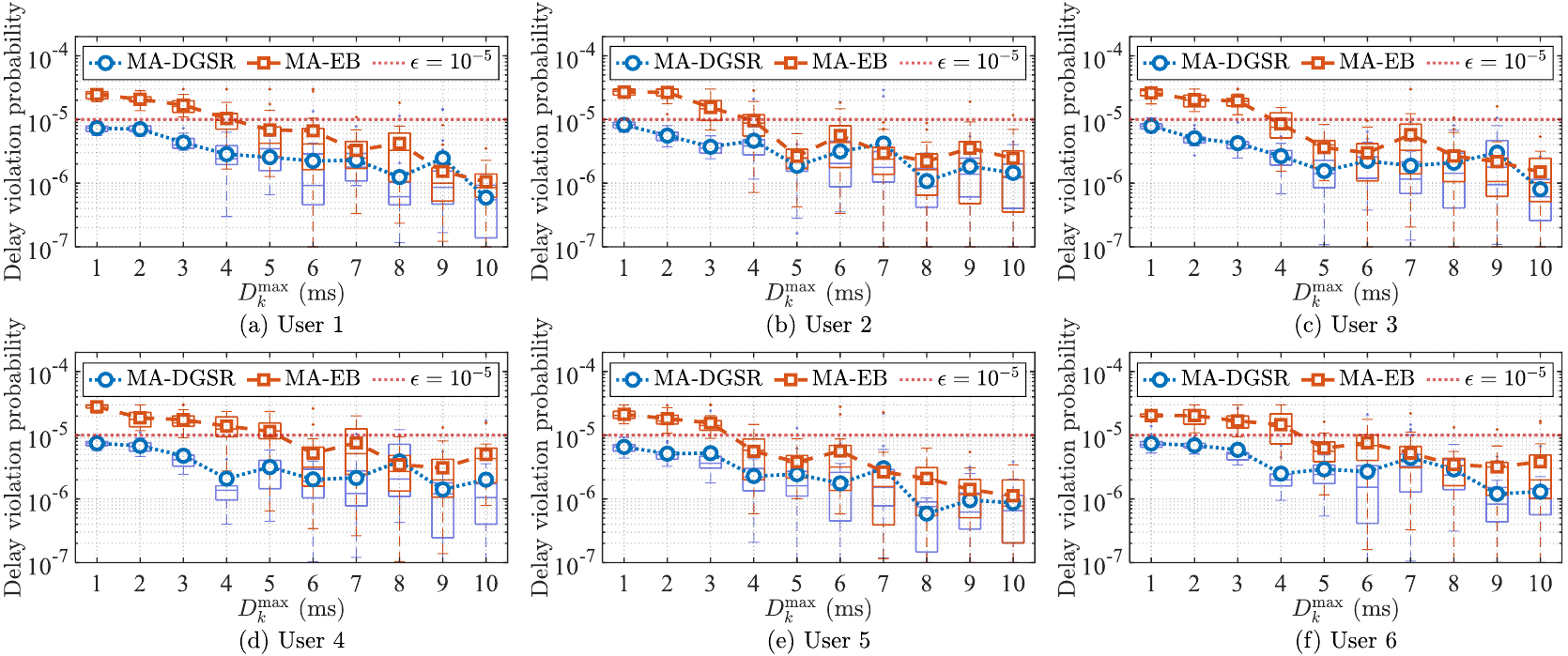}
        \vspace{-0.3cm}
        \caption{Empirical delay-outage probabilities of the MA-DGSR and MA-EB schemes.}
    }{%
        \includegraphics[width=17.75cm]{figure7.pdf}
        \vspace{-0.3cm}
        \caption{Empirical delay-outage probabilities of MA-DGSR and MA-EB.}
    }
    \label{fig:DORA_delay_violation}
\end{figure*}

\subsection{{Effects} of DGSR}
\vspace{-0.1cm}
We first evaluate the DGSR obtained from the backlog martingale analysis. In Fig.~\ref{fig:DGSR_characterization}, we plot the proposed DGSR $C_k^{\mathrm{DGSR}}$ and the corresponding martingale exponent $\zeta_k^\star$ for all users. It is observed from Figs.~\ref{fig:DGSR_characterization}(a) and \ref{fig:DGSR_characterization}(b) that, as $D_k^{\max}$ increases, both $C_{k}^{\rm DGSR}$ and $\zeta_k^\star$ decrease for all users, since the enlarged delay threshold relaxes the required service rate as well as the exponential decay requirement imposed by the delay-outage constraint. 

Conversely, it is observed from Figs.~\ref{fig:DGSR_characterization}(c) and \ref{fig:DGSR_characterization}(d) that decreasing $\varepsilon_k$ increases both $C_{k}^{\rm DGSR}$ and $\zeta_k^\star$. This indicates that a more stringent delay-outage constraint requires a higher service capability and a faster decay of the delay tail. It is also observed from Figs.~\ref{fig:DGSR_characterization}(e) and \ref{fig:DGSR_characterization}(f) that increasing $N_{\rm mov}(\xi)$ also raises the required service capability and the decay of delay tail, since a longer MA movement interval reduces the available DT duration while arrivals continue to accumulate during the MA movement. These results demonstrate that the DGSR jointly captures the impacts of the available delay budget, the delay-outage constraint, and the movement-induced service debt.

Based on the EB benchmark defined in \eqref{eq:EB_service_rate}, Fig.~\ref{fig:DGSR_EB_bar} compares the average arrival rate $\pi_{k,\mathrm{on}}R_k$, the DGSR $C_k^{\mathrm{DGSR}}$, and the EB-based service rate $C_k^{\mathrm{EB}}$. It is observed that both service rates exceed the average arrival rate because of traffic burstiness. The proposed DGSR is consistently higher than $C_k^{\mathrm{EB}}$ because it accounts for the service debt accumulated within each block, the movement-induced delay offset, and the movement-induced zero-service interval. In contrast, EB mainly captures steady-state arrival burstiness. Therefore, the DGSR provides a more appropriate service-rate requirement than EB.

\subsection{Performance of Proposed AO Algorithm}

\begin{figure*}[!t]
    \centering
    \includegraphics[width=18cm]{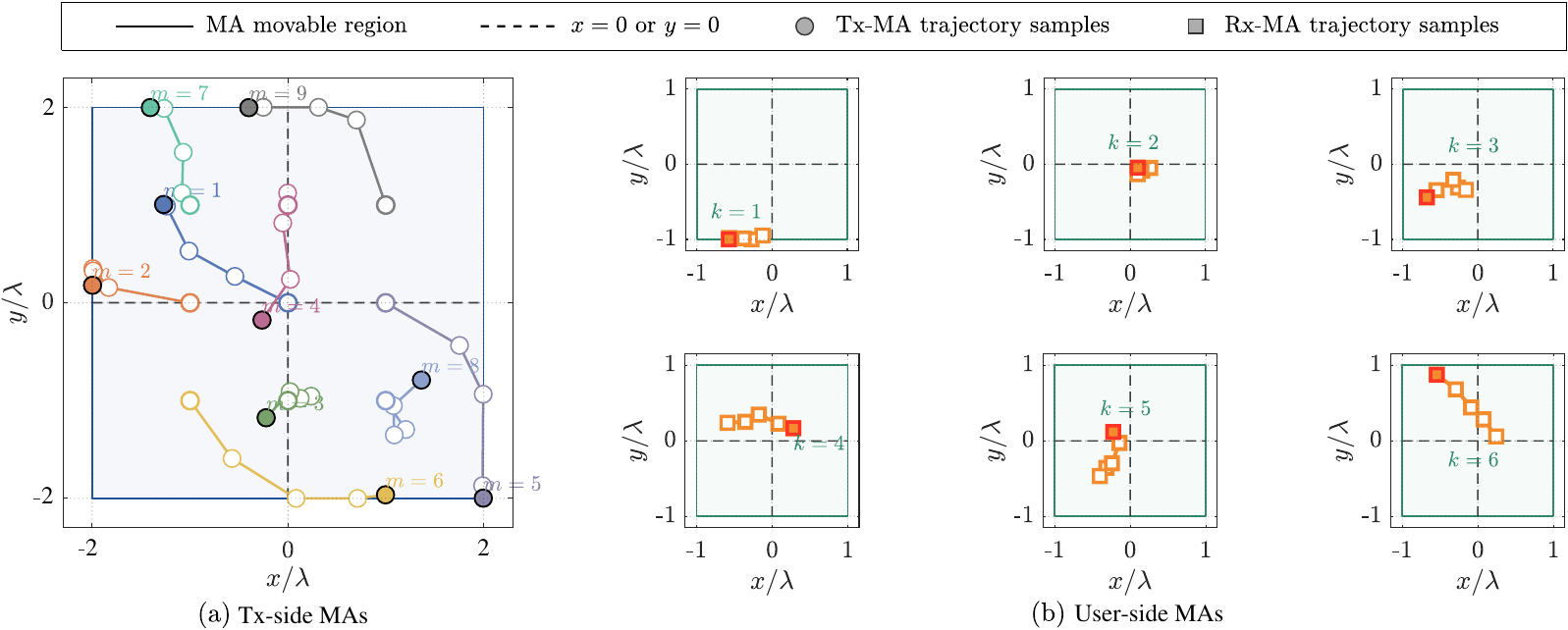}
    \vspace{-0.3cm}
    \caption{MA trajectory samples at representative transmission blocks {by the proposed AO algorithm}.}
    \vspace{-0.45cm}
    \label{fig:DORA_MA_trajectory}
\end{figure*}

We next evaluate the effectiveness of the proposed AO algorithm subject to the DGSR constraint. Fig.~\ref{fig:DORA_queue_delay} presents the queue backlog and actual delay evolution achieved by the proposed algorithm, where the first three transmission blocks, corresponding to 1500 slots, are shown for clarity. It is observed that the queue backlog increases during the ON periods of MMOO arrivals and the MA movement intervals, since data service is temporarily suspended during MA movement. Once the DT phase starts, the backlog is gradually reduced subject to the DGSR constraint. As $D_k^{\max}$ becomes larger, the system can operate at a lower service rate, allowing the queue to remain nonempty for a longer time compared with the case with a smaller $D_k^{\max}$. Nevertheless, the queues remain stable and the actual delays are effectively controlled, which verifies  the effectiveness of the proposed AO algorithm under bursty arrivals and dynamic MA movement.

\begin{figure}[!t]
    \centering
    \includegraphics[width=8cm]{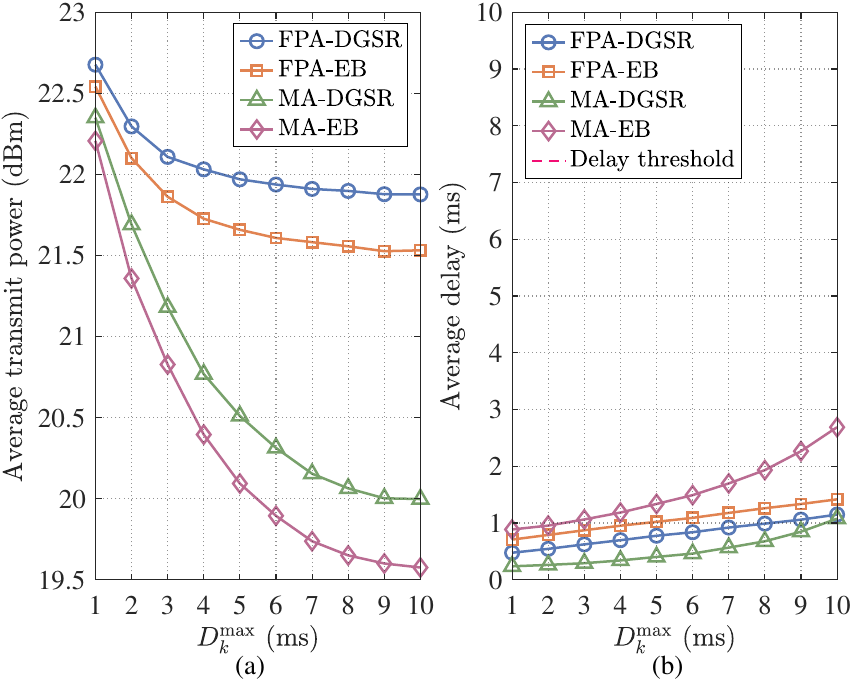}
    \vspace{-0.25cm}
    \caption{Comparison of FPA- and MA-based designs with DGSR- and EB-based service-rate constraints: (a) average transmit power versus $D_k^{\max}$ and (b) average actual delay versus $D_k^{\max}$.}
    \vspace{0.05cm}
    \label{fig:MA_FPA_EB_DGSR}
\end{figure}

The empirical delay-outage probabilities of all users under MA-DGSR and MA-EB are shown in Fig.~\ref{fig:DORA_delay_violation}, obtained from $N_{\rm MC}=50$ independent Monte Carlo realizations, each spanning $|\boldsymbol{\Xi}|=2\times10^4$ transmission blocks with $N_0=500$ slots per block. The red dotted line represents the prescribed {delay-outage} probability $\varepsilon_k=10^{-5}$. It is observed that the delay-outage probabilities of MA-DGSR remain below the target threshold for all users. This suggests that the deterministic DGSR constraint derived from the backlog martingale can satisfy the probabilistic delay-outage constraint. In contrast, MA-EB exceeds the target threshold when $D_k^{\max}$ is small, since its service rate requirement does not explicitly account for movement-induced service interruptions and the resulting backlog accumulation, potentially underestimating the service rate required under tight delay budgets.

Fig.~\ref{fig:DORA_MA_trajectory} shows the Tx- and Rx-MA trajectories sampled at five transmission blocks, i.e., $\xi=0,|\boldsymbol{\Xi}|/4,\ldots,|\boldsymbol{\Xi}|$, in Figs.~\ref{fig:DORA_MA_trajectory}(a) and~\ref{fig:DORA_MA_trajectory}(b), respectively. In Fig.~\ref{fig:DORA_MA_trajectory}(a), the trajectories of different MAs are marked in different colors. It is observed that the sampled MA positions remain within their respective movable regions, while the Tx-MAs satisfy the minimum spacing constraint. The positional changes across blocks demonstrate that the proposed {AO algorithm} adapts the MA configuration to the block-varying channel and queue states while accounting for movement delay.

Finally, Figs.~\ref{fig:MA_FPA_EB_DGSR}(a) and~\ref{fig:MA_FPA_EB_DGSR}(b) compare the average transmit power and average delay of all users under FPA-DGSR, FPA-EB, MA-DGSR, and MA-EB. It is observed from Fig.~\ref{fig:MA_FPA_EB_DGSR}(a) that the MA-based designs require less transmit power than their FPA counterparts. Furthermore, the EB-based designs further reduce the transmit power due to their lower service rate requirements, but this reduction comes at the violation of the delay-outage constraints.  Fig.~\ref{fig:MA_FPA_EB_DGSR}(b) further shows that MA-EB also incurs a larger average delay than MA-DGSR. This demonstrates that the proposed MA-DGSR achieves a more favorable power-delay tradeoff by explicitly accounting for the movement-induced service debt.

\section{Conclusion}

In this paper, we proposed a cross-layer optimization framework for an MA-enhanced MU-MIMO communication system. A two-timescale transmission protocol and a cross-layer queueing model were developed to capture MA movement delay, packet-level queue evolution, and movement-induced service interruption under Markov-modulated bursty arrivals. Based on this model, a long-term transmit-power minimization problem was formulated to jointly optimize the MAs' destination positions and transmit beamforming under delay-outage and practical movement constraints. To handle the probabilistic delay-outage constraints, we constructed a backlog martingale and derived an analytical sufficient condition in terms of the DGSR. An AO algorithm that combines a short-horizon approximation and SCA was also proposed to solve the resulting problem efficiently. Numerical results showed that the proposed algorithm satisfies the probabilistic delay-outage requirements and dramatically outperforms conventional FPA systems. Compared with the EB-based scheme, the proposed framework further mitigates the delay degradation caused by ignoring movement-induced service debt. These results were obtained under the assumptions of block-fading channels, MMOO traffic, and zero service during MA movement. This paper can be extended in several directions in future work. For example, we have assumed that no data transmission occurs during antenna movement and that the time overhead for channel estimation is ignored. It would be interesting to jointly optimize MA movement and communication strategies in a cross-layer framework.

\bibliographystyle{IEEEtran}
\bibliography{reference}

\end{document}